\newcommand{\R}{\mathbb{R}}
\newcommand{\N}{\mathbb{N}}
\newcommand{\A}{\mathcal{A}}
\newcommand{\E}{\mathcal{E}}
\newcommand{\G}{\mathcal{G}}
\newcommand{\Q}{\mathcal{Q}}
\crefname{hypothesis}{Hypothesis}{Hypotheses}
\title{Kron Reduction and Effective Resistance of Directed Graphs\thanks{Submitted to the editors DATE.
\funding{This work was supported by Japan Society for the Promotion of Science KAKENHI under 20K14760.}}}
\author{Tomohiro Sugiyama\thanks{Department of Mathematical Informatics, Graduate School of Information Science and Technology, the University of Tokyo,
  (T. Sugiyama: \email{working-sugiyama@g.ecc.u-tokyo.ac.jp}, K. Sato: \email{kazuhiro@mist.i.u-tokyo.ac.jp})}
\and Kazuhiro Sato\footnotemark[2]}
\begin{document}

\maketitle
\begin{abstract}
In network theory, the concept of effective resistance is a distance measure on a graph that relates the global network properties to individual connections between nodes.
In addition, the Kron reduction method is a standard tool for reducing or eliminating the desired nodes, which preserves the interconnection structure and the effective resistance of the original graph.
Although these two graph-theoretic concepts stem from the electric network on an undirected graph, they also have a number of applications throughout a wide variety of other fields.
In this study, we propose a generalization of a Kron reduction for directed graphs.
Furthermore, we prove that this reduction method preserves the structure of the original graphs, such as the strong connectivity or weight balance.
In addition, we generalize the effective resistance to a directed graph using Markov chain theory, which is invariant under a Kron reduction.
Although the effective resistance of our proposal is asymmetric, we prove that it induces two novel graph metrics in general strongly connected directed graphs.
In particular, the effective resistance captures the commute and covering times for strongly connected weight balanced directed graphs.
Finally, we compare our method with existing approaches and relate the hitting probability metrics and effective resistance in a stochastic case. In addition, we show that the effective resistance in a doubly stochastic case is the same as the resistance distance in an ergodic Markov chain.
\end{abstract}

\begin{keywords}
  Kron Reduction, Effective Resistance, Directed Graph, Algebraic Graph Theory
\end{keywords}

\begin{AMS}
  	05C50, 05C20, 15A06
\end{AMS}

\section{Introduction}
\label{sec: Introduction}
\subsection{Background}
\label{subsec: Background}
Large-scale network systems are ubiquitous and play a crucial role in modern society, among which
electrical networks, smart power grids \cite{pasqualetti2014controllability}, social networks \cite{liu2011controllability}, and multi-agent systems \cite{mesbahi2010graph} are but a few examples.
However, it is difficult to analyze models, run simulations, or design an appropriate controller for large-scale systems.
The network structure of such systems is often complex, which leads to serious scalability issues owing to the limited computational and storage capacity.
Thus, it is extremely important to have a methodology for graph reduction.
Clustering is one technique for constructing a reduced graph. Here, the vertices of the graph are merged into clusters by considering the edges of the graph \cite{schaeffer2007graph}.
However, if we focus only on important nodes and their connections, node elimination techniques is an appropriate approach.
In this study, we focus on the classical node elimination method: Kron reduction.

Originating from circuit theory, a Kron reduction is a method for simplifying electrical networks while preserving their electrical behavior \cite{kron1939tensor}.
Essentially, the Kron reduction of a connected graph is a Schur complement of the corresponding loopy Laplacian matrix (defined later) with respect to a subset of nodes.
Such a reduction appears in the context of electrical impedance tomography and in power networks \cite{pai2012energy}, among other areas.
As a notable property of the Kron reduction method, it preserves the effective resistance \cite{dorfler2012kron}.
The effective resistance between any two vertices $a$ and $b$ is defined as the resistance of the entire system when a voltage source is connected across them in an electrical network constructed from a graph by replacing each edge with a resistor.
As one of the useful properties of an effective resistance, it defines a distance function that considers the impact on all parallel paths, which is usually different from the shortest path \cite{klein1993resistance}.
This allows the effective resistance to be used in place of the shortest path distance when analyzing problems not only in circuit theory but also in chemistry \cite{janezic2015graph}, control theory \cite{barooah2006graph}, and other areas.
In addition, it is well known that the effective resistance is related to the study of random walks and Markov chains in a network \cite{chandra1996electrical, doyle1984random}.

Based on these definitions, a Kron reduction and an effective resistance are restricted to undirected graphs.
However, in many applications including Markov chains and network systems, directed graphs naturally arise.
Indeed, in \cite{young2015new}, the authors proposed a generalized definition of an effective resistance, and in \cite{fitch2019effective}, the author applied this definition to a Kron reduction of a directed graph while preserving such resistance.
However, these methods are defined only for loop-less directed graphs, and the reduction method may lose the positivity of the edges and interconnection structure of the original graph, as pointed out in Section \ref{sec: Comparison with existing studies}.
Furthermore, these methods have no clear physical interpretations.
\subsection{Contribution}
\label{subsec: Contribution}
First, we generalize the Kron reduction method analyzed in \cite{dorfler2012kron} to include directed graphs and relate the topological and algebraic properties of the resulting Kron-reduced Laplacian to those of the original Laplacian. 
The results provide a foundation for the application of a Kron reduction to network-reduction problems involving directed graphs.
Moreover, we define the effective resistance of directed graphs based on the Markov chain theory using such a reduction.
We show that the effective resistance of a loop-less-directed graph is invariant under a Kron reduction.
Moreover, this can be computed from the pseudoinverse of the loop-less Laplacian and is related to the commute and cover times for strongly connected weight balanced directed graphs.
Although the effective resistances are generally asymmetric, we derive two novel distances based on the effective resistance and node characteristics in a strongly connected case.
Moreover, we provide a comparison with several related existing studies and prove that the effective resistance in a doubly stochastic case is the same as the resistance distance in the ergodic Markov chain defined in \cite{choi2019resistance}.
In addition, we show that the hitting probability metric defined in \cite{boyd2021metric} and the effective resistance in the strongly connected directed case are related.

Table \ref{tab: result} summarizes the properties of the Kron reduction method generalized herein and in \cite{dorfler2012kron, fitch2019effective}.
The proposed reduction method is defined for directed graphs that may contain self-loops.
Furthermore, as previously mentioned, our proposed method preserves the interconnection structure and positivity of the edges of the original graph.
Note that the effective resistance defined in this study is different from the existing definition in \cite{young2015new}, although both definitions coincide in the case of an undirected graph.
Thus, the effective resistance in this study is another generalization of the effective resistance defined in \cite{dorfler2012kron}.
\begin{table}[htbp]
\caption{Properties of our reduction method and those of previous approaches.}
    \label{tab: result}
    \begin{tabular}{|l||l|l|l|}
    \hline
    method & \cite{dorfler2012kron} & \cite{fitch2019effective} & proposed method \\ \hline\hline
    undirected graph& true & true & true \\ \hline
    directed graph & false & true & \textbf{true} \\ \hline
    interconnection structure & true & false & \textbf{true} \\ \hline
    self-loop & true & false & \textbf{true} \\ \hline
    positivity & true & false & \textbf{true} \\ \hline
    \end{tabular}
\end{table}
\subsection{Outline}
The remainder of this paper is organized as follows. In Section \ref{sec: Notation}, we provide some basic concepts of algebraic graph theory.
In Section \ref{sec: Kron reduction of directed graphs}, we introduce a Kron reduction to a directed graph and discuss the aspects of a graph-theoretic analysis.
In Section \ref{sec: Effective resistance of directed graphs}, we provide a novel definition of the effective resistance of directed graphs and detail their properties. 
In Section \ref{sec: Comparison with existing studies}, we compare the above two notions with those of existing studies.
In Section \ref{sec: Conclusion}, we summarize our research.

\section{Notation}
\label{sec: Notation}
Let $\mathbf{1}_{p \times q},\mathbf{0}_{p \times q}$ be $p \times q$ dimensional matrices of a unit and zero entries, let $e_i$ be the vector of
appropriate dimension with $1$ at position $i$ and $0$ at other positions, and let $I_n$ be the $n$-dimensional identity matrix. 
We adopt $\mathbf{1}_n = \mathbf{1}_{n \times 1}$ and $\mathbf{0}_n =\mathbf{0}_{n \times 1}$ as shorthand.
Given a finite set $X$, let $|X|$ be its cardinality, and define for $n \in \N$ the set $[n] = \{1,\ldots, n\}$. 
For $\alpha, \beta \subset [n]$ and $A \in \R^{n \times n}$, let $A[\alpha, \beta]$ denote the submatrix of $A$ obtained by the rows indexed using $\alpha$ and the columns indexed using $\beta$. 
Similarly, for $\alpha \subset [n]$ and $v \in \R^n$, let $v[\alpha] \in \R^{|\alpha|}$ denote the sub-vector of $v$ obtained from the elements indexed using $\alpha$.
The \textit{Schur complement} of $A$ with respect to the indices $\alpha$ is the $|\alpha| \times |\alpha|$ dimensional matrix $A/A[\alpha^c, \alpha^c]$ defined as
\begin{align*}
    A/A[\alpha^c, \alpha^c] &:= A[\alpha, \alpha] - A[\alpha, \alpha^c] A[\alpha^c, \alpha^c]^{-1} A[\alpha^c, \alpha]
\end{align*}
if $A[\alpha^c,\alpha^c]$ is nonsingular, where $\alpha^c = [n] \setminus \alpha$. As an exception, we define $A/A[\emptyset, \emptyset] = A$.
A square matrix $A \in \R^{n \times n}$ is diagonally dominant if $|A_{i i}| \ge \sum_{j \neq i}|A_{i j}|$ for all $i \in [n]$. 
A Z-matrix is a matrix with non-positive off-diagonal entries, whereas an M-matrix is a Z-matrix with eigenvalues, the real parts of which are positive.

Let $\G = ([n], \E, \A)$ be a directed and weighted graph, where $[n]$, $\E \subset [n] \times [n]$, and $\A \in \R^{n \times n}$ denote the node set, edge set, and weighted adjacency matrix with non-negative entries $\A_{i j}$, respectively.
A positive off-diagonal element $\A_{i j} > 0$ induces a weighted edge $\{i,j\} \in \E$ and a positive diagonal element $\A_{ii} > 0$ induces a weighted self-loop $\{i,i\} \in \E$.
The degree matrix $\mathcal{D} \in \R^{n \times n}$ for $\G$ is defined as $\mathcal{D} := \mathrm{diag}(\{\sum_{j=1}^n \A_{i j}\}_{i=1}^n)$, where $\mathrm{diag}(v_1, \ldots, v_n)$ denotes the associated diagonal matrix, that is, $\mathrm{diag}(v_1, \ldots, v_n) =\begin{bmatrix}v_1\\& \ddots\\& &v_n\end{bmatrix} $.
The associated directed (loop-less) Laplacian matrix is defined as $\mathcal{L} := \mathcal{D} - \A$.
Note that self-loops $\A_{ii}$ do not appear in the Laplacian.
For these reasons, we define the loopy Laplacian matrix as $\mathcal{Q} := \mathcal{L} + \mathrm{diag}(\{\A_{ii}\}_{i=1}^n)$ , which is also called a discrete Schr\"odinger operator \cite{bendito2005potential}.
Note that adjacency matrix $\A$ can be recovered from the loopy Laplacian $\Q$ as $\A_{i,i} = \sum_{j=1}^n \Q_{i j}$ and $\A_{i j} = -\Q_{i j}\,(i \neq j)$, and thus $\Q$ uniquely determines the graph $\G$.
We refer to $\Q$ as strictly loopy (loop-less) Laplacian if the associated graph features at least one (no) self-loop.
Based on the Gershgorin circle theorem \cite{horn2012matrix}, all eigenvalues of a loopy Laplacian matrix have non-negative real parts, and the strictly loopy Laplacian matrix is nonsingular.
Let $\Pi_n = I_n - \mathbf{1}_{n \times n}/ n$ denote the orthogonal projection matrix onto the subspace of $\mathbb{R}^n$ perpendicular to $\mathbf{1}_n$.
The matrix $\Pi_n$ is symmetric and since $\mathcal{L}\mathbf{1}_n = \mathbf{0}_n$, $\mathcal{L}\Pi_n = \mathcal{L}$ and $\Pi_n \mathcal{L}^{\top} = \mathcal{L}^{\top}$ for any graph.
A directed graph is strongly connected if there is a path between all pairs of vertices. A square matrix $X \in \R^{n\times n}$ is said to have the strongly connected (SC) property if for each pair
of distinct integers $i,j \in [n]$ there is a sequence of distinct integers $k_1 = i,\ldots, k_m = j$ such that each entry $X_{k_1 k_2} , X_{k_2 k_3} ,\ldots, X_{k_{m-1} k_m} $ is nonzero.
Note that a directed graph $\G$ is strongly connected if and only if any one of $\A$, $\Q$, and $\mathcal{L}$ has the SC property. A matrix  having the SC property is called to be irreducible.

\section{Kron reduction of directed graphs}
\label{sec: Kron reduction of directed graphs}
In \cite{dorfler2012kron}, the authors provided a graph-theoretic analysis of the Kron reduction process of an undirected graph.
In this section, we generalize the definition of a Kron reduction to directed graphs and prove its topological and algebraic properties.
\subsection{Definition}
\label{subsec: Definition}
In this subsection, we define the Kron reduction of a directed graph and provide its structural properties.
\begin{definition}[Kron Reduction]\label{def: KR}
    Let $\Q \in \R^{n \times n}$ be a loopy Laplacian matrix and $\alpha \subsetneq [n]$ be a proper subset of nodes with $|\alpha| \geq 2$. The $|\alpha| \times |\alpha|$ dimensional Kron-reduced matrix $\Q_{\mathrm{red}}$ is defined as
    \begin{align*}
        \Q_{\mathrm{red}} &:= \Q/\Q[\alpha^c, \alpha^c]=\Q[\alpha, \alpha] - \Q[\alpha, \alpha^c] \Q[\alpha^c, \alpha^c]^{-1} \Q[\alpha^c, \alpha].
    \end{align*}
\end{definition}
Note that the definition coincides with that of \cite{dorfler2012kron} for undirected graphs.

In the following, we refer to the nodes of $\alpha$ and $\alpha^c$ as boundary and interior nodes, respectively.
 For notational simplicity and without loss of generality, we will assume that $n$ nodes are labeled such that $\alpha = \{1, \ldots, |\alpha|\}$ and $\alpha^c = \{|\alpha|+1, \ldots, n\}$.
 
In \cite{dorfler2012kron}, it was shown that the Kron-reduced matrix $\Q_{\mathrm{red}}$ is well-defined when $\G$ is undirected and connected.
When $\G$ is directed, however, it is not always well-defined because of the singularity of $\Q[\alpha^c,\alpha^c]$.
To describe the conditions for the existence of a Kron reduction, we introduce the notion of a \textit{reachable subset}.
\begin{definition}[Reachable subset]
Let $\G = ([n], \E, \A)$ be a directed and weighted graph.
The subset $\alpha \subset [n]$ is said to be reachable if for any $i \in \alpha^c$, there exist a node $j \in \alpha$ and a path in $\G$ from $i$ to $j$.
We call node $k$ of $\G$ globally reachable if $\{k\}$ is reachable.
\end{definition}

For example, consider a 3-node graph shown in Fig. \ref{fig: not_transitive}. Although $\{1,2\}$ and $\{2,3\}$ are reachable subsets, $\{1,3\}$ is not, because node $2$ has no outer-edges.
\begin{figure}[hbtp]
    \centering
    \includegraphics[keepaspectratio, scale=0.50]{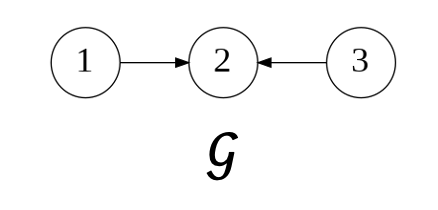}
    \caption{A simple 3-node directed graph.}
    \label{fig: not_transitive}
\end{figure}

The following Lemma is a generalization of Lemma III. 3 in \cite{dorfler2012kron}.
\begin{lemma}[\textit{Well-definedness of Kron Reduction}]\label{lem: Well-definedness of Kron Reduction}
Let $\G = ([n], \E, \A)$ be a directed and weighted graph and $\alpha \subsetneq [n]$ be a proper subset of nodes with $|\alpha|\ge 2$.
Then, the Kron-reduced matrix $\Q_{\mathrm{red}} = \Q/\Q[\alpha^c, \alpha^c]$ is well-defined if $\alpha \subsetneq [n]$ is a reachable subset of $\G$. 
Furthermore, $\alpha$ is a reachable subset if and only if the matrix $\mathcal{L}/\mathcal{L}[\alpha^c,\alpha^c]$ is well-defined.
\end{lemma}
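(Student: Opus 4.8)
The plan is to reduce the entire statement to the nonsingularity of the principal submatrices $\mathcal{Q}[\alpha^c,\alpha^c]$ and $\mathcal{L}[\alpha^c,\alpha^c]$, since ``well-defined'' means exactly that the relevant block is invertible. First I would record the structural facts that drive everything. Both blocks are Z-matrices, and a direct computation of the $i$-th row sum of $\mathcal{L}[\alpha^c,\alpha^c]$ gives $\sum_{j\in\alpha}\mathcal{A}_{ij}\ge 0$, while for $\mathcal{Q}[\alpha^c,\alpha^c]$ it gives $\sum_{j\in\alpha}\mathcal{A}_{ij}+\mathcal{A}_{ii}\ge 0$. Hence both blocks are weakly (row) diagonally dominant, and row $i$ is \emph{strictly} dominant precisely when $i$ has a direct edge into the boundary set $\alpha$ (and, for $\mathcal{Q}$, possibly a self-loop). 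This identifies the graph-theoretic meaning of strict dominance, which is what the reachability hypothesis will supply.

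For the implication ``reachable $\Rightarrow$ well-defined'' (which also gives the ``if'' half of the equivalence for $\mathcal{L}$), I would prove nonsingularity by a maximum-entry argument rather than invoking M-matrix machinery, to stay self-contained. Suppose $Bv=\mathbf{0}_{|\alpha^c|}$ with $B=\mathcal{Q}[\alpha^c,\alpha^c]$ (the same proof works verbatim for $\mathcal{L}$) and $v\neq\mathbf{0}_{|\alpha^c|}$; choosing the sign of $v$ suitably, take a maximiser $v_k=\max_i|v_i|=:M>0$. Expanding $(Bv)_k=0$ and using $B_{kj}\le 0$ together with weak diagonal dominance forces equality in the dominance inequality, which yields two consequences: row $k$ is not strictly dominant (so $k$ has no edge into $\alpha$ and, in the $\mathcal{Q}$ case, no self-loop), and every interior out-neighbour $j$ of $k$ again satisfies $v_j=M$. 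I would then march along a path from $k$ to $\alpha$ guaranteed by reachability, propagating $v=M$ node by node up to the last interior node $k_{t-1}$ before the first boundary node $k_t$: the equality case forces $k_{t-1}$ to be non-strictly-dominant, yet $\mathcal{A}_{k_{t-1}k_t}>0$ with $k_t\in\alpha$ makes it strictly dominant, a contradiction. Hence $B$ has trivial kernel and $\mathcal{Q}_{\mathrm{red}}$ (respectively $\mathcal{L}/\mathcal{L}[\alpha^c,\alpha^c]$) is well-defined.

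For the converse ``$\mathcal{L}/\mathcal{L}[\alpha^c,\alpha^c]$ well-defined $\Rightarrow$ reachable'' I would prove the contrapositive by exhibiting singularity when reachability fails. If some interior node $i$ reaches no boundary node, let $S\subseteq\alpha^c$ be the set of all nodes reachable from $i$; by construction $S$ is closed under out-edges, so $\mathcal{L}[S,\alpha^c\setminus S]=0$ and, after ordering $\alpha^c$ with $S$ first, $\mathcal{L}[\alpha^c,\alpha^c]$ is block triangular with diagonal block $\mathcal{L}[S,S]$. Closedness makes every row sum of $\mathcal{L}[S,S]$ vanish, i.e.\ $\mathcal{L}[S,S]\mathbf{1}_{|S|}=\mathbf{0}_{|S|}$, so $\mathcal{L}[S,S]$ is singular and therefore $\det\mathcal{L}[\alpha^c,\alpha^c]=0$. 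This step uses the loop-lessness of $\mathcal{L}$ crucially: a self-loop inside $S$ would spoil the zero row sums, which is exactly why only the sufficient direction is claimed for $\mathcal{Q}$.

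I expect the maximum-entry step in the second paragraph to be the main obstacle: one must handle the sign and equality analysis carefully and, crucially, verify that the maximum genuinely propagates from each interior node to the \emph{next} node along the path, so that the contradiction lands precisely on the last interior node adjacent to the boundary. The Z-matrix structure and the exact identification of strictly dominant rows with boundary-adjacent nodes established in the first paragraph are what make this propagation valid.
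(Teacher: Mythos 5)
Your proof is correct, and the sufficiency direction takes a genuinely different route from the paper. The paper proves invertibility of $\Q[\alpha^c,\alpha^c]$ by first treating the case where the interior subgraph is strongly connected — reachability then supplies one strictly dominant row, so the block is irreducibly diagonally dominant and hence invertible by the standard result cited from Horn and Johnson — and then handles the general case by permuting $\Q[\alpha^c,\alpha^c]$ into block-triangular form whose diagonal blocks are the strongly connected components of the interior subgraph, each invertible by the first step. Your discrete maximum-principle argument on a putative kernel vector $v$ replaces both steps: the identification of strictly dominant rows with boundary-adjacent interior nodes, the equality analysis at a positive maximiser, and the propagation of the maximum along a path to $\alpha$ together yield a contradiction without any decomposition into components and without invoking the irreducible-diagonal-dominance theorem. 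This is more elementary and self-contained, and it sidesteps the one slightly terse point in the paper's proof, namely the verification that \emph{every} diagonal block (not just one containing a boundary-adjacent node) inherits a strictly dominant row from reachability; the price is that you lose the block-triangular structural picture, which the paper's argument makes explicit. Your treatment of the converse is essentially identical to the paper's: when some interior node cannot reach $\alpha$, the set of nodes it reaches is closed under out-edges, giving a principal block of $\mathcal{L}[\alpha^c,\alpha^c]$ with zero row sums and hence a vanishing determinant, and you correctly flag that the self-loop terms are exactly what prevents this direction from extending to $\Q$. One cosmetic point: make sure your set $S$ contains $i$ itself (so that it is nonempty), as the paper implicitly does.
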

\begin{proof}
    First, consider the case when the subgraph among the interior nodes is strongly connected.
    From the assumption that $\alpha$ is reachable, there exist an interior node $i \in \alpha^c$, a boundary node $j \in \alpha$, and an edge from $i$ to $j$.
     Thus,
     \begin{align*}
         \Q_{i i} = \sum_{k=1}^n \A_{i k}
         \geq \sum_{k \in \alpha^c \setminus \{i\}}\A_{i k} + \A_{i i} + \A_{i j}
          > \sum_{k \in \alpha^c \setminus \{i\}}|\Q_{i k}|,
     \end{align*}
     where $\A_{i j} > 0$ and $\A_{i k} = -\Q_{i k} = |-\Q_{i k}|$ $(k \neq i)$.
     Since $\Q$ is a loopy Laplacian and $\Q[\alpha^c,\alpha^c]$ has the SC property, $\Q[\alpha^c, \alpha^c]$ is irreducibly diagonally dominant, that is, $\Q[\alpha^c, \alpha^c]$ is M-matrix and invertible \cite{horn2012matrix}. 
     If the graph among the interior nodes consists of multiple strongly connected components, then, after relabeling the nodes, the matrix $\Q[\alpha^c, \alpha^c]$ is block trigonal with irreducible diagonal blocks corresponding to the strongly connected components.\footnote{If $\G$ is undirected, $\Q[\alpha^c,\alpha^c]$ can be transformed into a block diagonal form.} Each diagonal block is invertible by the previous arguments, so is $\Q[\alpha^c, \alpha^c]$. Thus, $\Q_{\mathrm{red}}$ is well-defined.\\
    \indent Similar to the previous arguments, $\mathcal{L}/\mathcal{L}[\alpha^c, \alpha^c]$ is well-defined if $\alpha$ is a reachable subset. To show the converse, we assume that $\alpha$ is not a reachable subset. Then, there exists $i \in \alpha^c$ that cannot reach any $j \in \alpha$. Let $\beta$ be the node set of every node in $\G$ that is reachable from node $i$. Since $\beta \subset \alpha^c$ based on such assumption, after relabeling the nodes, $\mathcal{L}$ can be rewritten as
    \begin{align}
    \label{unreachable}
        \mathcal{L} &=
        \left[\begin{array}{c|c|c}\mathcal{L}[\alpha,\alpha]&\mathcal{L}[\alpha,\beta]&\mathcal{L}[\alpha,\gamma]\\\hline \mathbf{0}_{|\beta| \times |\alpha|}&\mathcal{L}[\beta,\beta]&\mathbf{0}_{|\beta| \times |\gamma|}\\\hline \mathcal{L}[\gamma,\alpha]&\mathcal{L}[\gamma,\beta]&\mathcal{L}[\gamma,\gamma]\end{array}\right],
    \end{align}
    where $\gamma = (\alpha \cup \beta)^c$.  $\mathcal{L}[\beta,\beta] \mathbf{1}_{|\beta|} = \mathbf{0}_{|\beta|}$ follows from the definition of loop-less Laplacian and (\ref{unreachable}). Because $\det \mathcal{L}[\alpha^c,\alpha^c] = \det \mathcal{L}[\beta,\beta]  \det \mathcal{L}[\gamma, \gamma]$ and $\mathcal{L}[\beta,\beta]$ has a zero eigenvalue, the singularity of $\mathcal{L}[\alpha^c,\alpha^c]$ is guaranteed and the proof is complete.
\end{proof}

The following lemma, which is a generalization of Lemma II.1 in \cite{dorfler2012kron},
shows the structural properties of the Kron reduction method.
\begin{lemma}[\textit{Structural Properties of Kron Reduction}]\label{lem: Structure}
    Let $\G = ([n], \E, \A)$ be a directed and weighted graph and $\alpha \subsetneq [n]$ be a reachable subset of nodes with $|\alpha|\ge 2$. The following statements hold:
    \begin{enumerate}  
        \item  The right accompanying matrix $\Q_{\mathrm{r a c}} := -\Q[\alpha^c,\alpha^c]^{-1}\Q[\alpha^c,\alpha] \in \R^{(n - |\alpha|)\times |\alpha| }$
        and the left accompanying matrix $\Q_{\mathrm{l a c}}:= -\Q[\alpha,\alpha^c]\Q[\alpha^c,\alpha^c]^{-1}\in \R^{|\alpha| \times (n - |\alpha|)}$ are non-negative. If the
        subgraph among the interior nodes is strongly connected and each boundary node $j \in \alpha$ is adjacent from (to) at least one interior node $i \in \alpha^c$, i.e., $(i,j) \in \E$ ($(j,i) \in \E$) then $\Q_{\mathrm{r a c}}$ ($\Q_{\mathrm{l a c}}$) is positive.
         In addition, if $\Q \equiv \mathcal{L}$ is a loop-less Laplacian, then $\Q_{\mathrm{r a c}} = \mathcal{L}_{\mathrm{r a c}} := -\mathcal{L}[\alpha^c, \alpha^c]^{-1} \mathcal{L}[\alpha^c, \alpha]$ is row stochastic.
         \item \textbf{Closure properties:} If $\Q$ is a loopy Laplacian, then $\Q_{\mathrm{red}}$ is also a loopy Laplacian matrix. Moreover, if $\Q$ is loop-less, then $\Q_{\mathrm{red}}$ is loop-less.
    \end{enumerate}
\end{lemma}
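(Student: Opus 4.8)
The plan is to reduce every sign and stochasticity claim to two structural facts about the interior block $D := \Q[\alpha^c,\alpha^c]$. First, $D$ is a nonsingular M-matrix: the reasoning in the Well-definedness lemma shows it is a diagonally dominant Z-matrix with nonnegative diagonal which is invertible, and such a matrix is a nonsingular M-matrix whose inverse is entrywise nonnegative, i.e. $D^{-1} \ge 0$. Second, when the subgraph among the interior nodes is strongly connected, $D$ is irreducible, and an irreducible nonsingular M-matrix has an entrywise \emph{strictly positive} inverse, $D^{-1} > 0$. I would state these two facts at the outset, since they are the engine for everything that follows; throughout I write $A = \Q[\alpha,\alpha]$, $B = \Q[\alpha,\alpha^c]$, $C = \Q[\alpha^c,\alpha]$, so that $\Q_{\mathrm{red}} = A - BD^{-1}C$, and note that the off-diagonal blocks $B$ and $C$ have entries $-\A_{ij} \le 0$.

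For the accompanying matrices in statement 2, nonnegativity is immediate sign propagation: $\Q_{\mathrm{rac}} = -D^{-1}C = -(\ge 0)(\le 0) \ge 0$ and $\Q_{\mathrm{lac}} = -BD^{-1} = -(\le 0)(\ge 0) \ge 0$ entrywise. For the positivity refinement, I read the hypothesis as follows: ``$j$ adjacent from $i$'', i.e. $(i,j)\in\E$, forces $C_{ij} = \Q_{ij} < 0$, while ``$j$ adjacent to $i$'', i.e. $(j,i)\in\E$, forces $B_{ji} = \Q_{ji} < 0$. Then each entry $(\Q_{\mathrm{rac}})_{kj} = \sum_{l\in\alpha^c} (D^{-1})_{kl}(-C_{lj})$ is a sum of nonnegative terms that includes the strictly positive term $(D^{-1})_{ki}(-C_{ij}) > 0$ because $D^{-1} > 0$; hence $\Q_{\mathrm{rac}} > 0$, and symmetrically $\Q_{\mathrm{lac}} > 0$.

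For the closure property in statement 1, I would check the two defining features of a loopy Laplacian—nonpositive off-diagonals and nonnegative row sums—separately. The off-diagonal entries of $\Q_{\mathrm{red}} = A - BD^{-1}C$ are nonpositive since $A$ has nonpositive off-diagonals and $BD^{-1}C \ge 0$ (the sign pattern is $(\le 0)(\ge 0)(\le 0)$). For the row sums I would exploit the identity $\Q\mathbf{1}_n = (\A_{ii})_i$, whose block form is $A\mathbf{1}_{|\alpha|} + B\mathbf{1}_{|\alpha^c|} = a$ and $C\mathbf{1}_{|\alpha|} + D\mathbf{1}_{|\alpha^c|} = d$ with $a,d \ge 0$ the boundary and interior self-loop weights. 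Eliminating $C\mathbf{1}_{|\alpha|}$ from the second relation yields $\Q_{\mathrm{red}}\mathbf{1}_{|\alpha|} = a - BD^{-1}d$, which is nonnegative because $a \ge 0$ and $BD^{-1}d \le 0$; thus $\Q_{\mathrm{red}}$ is a loopy Laplacian. In the loop-less case $a = d = \mathbf{0}$, so this row sum is exactly $\mathbf{0}_{|\alpha|}$ and $\Q_{\mathrm{red}}$ is loop-less.

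The row-stochasticity of $\mathcal{L}_{\mathrm{rac}}$ falls out of the same bookkeeping in the loop-less setting: nonnegativity is already established, and from $\mathcal{L}\mathbf{1}_n = \mathbf{0}_n$ we get $C\mathbf{1}_{|\alpha|} = -D\mathbf{1}_{|\alpha^c|}$, whence $\mathcal{L}_{\mathrm{rac}}\mathbf{1}_{|\alpha|} = -D^{-1}C\mathbf{1}_{|\alpha|} = \mathbf{1}_{|\alpha^c|}$, so every row sums to one. I expect the only genuine obstacle to be the matrix-theoretic input $D^{-1}\ge 0$ (and the strict $D^{-1}>0$ under irreducibility); once that is invoked, the remainder is sign propagation through the Schur-complement formula together with the two row-sum identities, so care is mainly needed in citing the inverse-positivity theorem for M-matrices correctly and in matching the adjacency-direction hypotheses to the signs of the entries of $B$ and $C$.
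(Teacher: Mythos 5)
Your proof is correct, but for the closure property it takes a genuinely different route from the paper. The paper proves statement 1 by perturbation: it replaces $\Q$ with $\Q_\epsilon=\Q+\epsilon I_n$, which is an invertible M-matrix and strictly row diagonally dominant, invokes the closure of these matrix classes under Schur complementation (citing \cite{zhang2006schur}), and then lets $\epsilon\to 0^{+}$ to conclude that $\Q_{\mathrm{red}}$ is a diagonally dominant Z-matrix, hence a loopy Laplacian; the loop-less case and statement 2 are deferred to the corresponding undirected argument in \cite{dorfler2012kron}. You instead verify the two defining properties of a loopy Laplacian directly: nonpositive off-diagonals by sign propagation through $A-BD^{-1}C$, and nonnegative row sums via the block identities $A\mathbf{1}+B\mathbf{1}=a$ and $C\mathbf{1}+D\mathbf{1}=d$, which give the clean formula $\Q_{\mathrm{red}}\mathbf{1}_{|\alpha|}=a-BD^{-1}d\ge 0$ (and $=\mathbf{0}$ in the loop-less case). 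Your approach buys an explicit expression for the reduced self-loop weights and avoids the limit argument entirely, at the cost of having to establish up front that $D=\Q[\alpha^c,\alpha^c]$ is a nonsingular M-matrix with $D^{-1}\ge 0$ (and $D^{-1}>0$ under irreducibility) --- which is exactly the input the paper's deferred proof of statement 2 also relies on, so nothing extra is really being assumed; just make sure to justify that an invertible diagonally dominant Z-matrix with nonnegative diagonal is a nonsingular M-matrix (Gershgorin confines the spectrum to the closed right half-plane meeting the imaginary axis only at the origin, and invertibility excludes $0$) and to cite the inverse-positivity theorem for (irreducible) nonsingular M-matrices. Your treatment of statement 2 and of the row-stochasticity of $\mathcal{L}_{\mathrm{rac}}$ matches the standard argument the paper points to.
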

\begin{proof}
The proof of $1)$ is similar to Lemma I\hspace{-.1em}I.1 of \cite{dorfler2012kron}.
Note that if $\G$ is directed, then $\Q_{\mathrm{lac}}$ is not always equivalent to $\Q_{\mathrm{rac}}^{\top}$.
To prove statement $2)$, note that the difference from Lemma I\hspace{-.1em}I.1 of \cite{dorfler2012kron} is that $\Q$ is not always an M-matrix. However, since $\Q[\alpha, \alpha^c]$ is non-positive, $\Q_{\mathrm{red}} = \Q[\alpha,\alpha] + \Q[\alpha,\alpha^c]\Q_{\mathrm{rac}}$ is a Z-matrix. The closure property of the Schur complement \cite{zhang2006schur} states that $\Q_{\mathrm{red}}$ is row diagonally dominant. Hence, $\Q_{\mathrm{red}}$ is a loopy Laplacian.
The proof of the closure of the loop-less case follows from the same argument in the proof of Lemma I\hspace{-.1em}I.1. of \cite{dorfler2012kron}.
\end{proof}

\textcolor{black}{The consequence of Lemma  \ref{lem: Structure} is that $\Q_{\mathrm{red}}$ is a loopy Laplacian that induces again a directed and weighted graph, which we denote $\G_{\mathrm{red}} = (\alpha, \E_{\mathrm{red}}, \A_{\mathrm{red}})$.}

\begin{example}[Electrical networks]\label{ex: Electrical}
We consider the connected and resistive electrical network model with $n$ nodes, current injections $I \in \R^{n}$, nodal voltages $V \in \R^{n}$, branch conductances $A_{i j} \ge 0$, and shunt conductances $A_{i i} \ge 0$ connecting node $i$ to the ground. The current-balance equations are $I = Q V$, where the conductance matrix $Q \in \R^{n \times n}$ is a symmetric loopy Laplacian. When we partition the nodes into two sets $[n] = \alpha \cup \alpha^c$, the associated partitioned current-balance equations are
\begin{align*}
    \left[\begin{array}{c}
         I[\alpha]  \\ \hline
         I[\alpha^c]
    \end{array}\right] &= \left[\begin{array}{c|c}
         Q[\alpha, \alpha] & Q[\alpha, \alpha^c] \\ \hline
         Q[\alpha^c, \alpha] & Q[\alpha^c, \alpha^c]
    \end{array}\right]
    \left[\begin{array}{c}
         V[\alpha]  \\ \hline
         V[\alpha^c]
    \end{array}\right].
\end{align*}
By using a Gaussian elimination method, we obtain the reduced model
\begin{align*}
    I_{\mathrm{red}} := I[\alpha] + Q_{\mathrm{ac}} I[\alpha^c] = Q_{\mathrm{red}} V[\alpha],
\end{align*}
where $I_{\mathrm{red}}$ and $Q_{\mathrm{red}}$ can be interpreted as a reduced current injection and a reduced conductance matrix, respectively, and the accompanying matrix $Q_{\mathrm{ac}} = Q_{\mathrm{l a c}} = Q_{\mathrm{r a c}}^{\top}$ maps the internal currents to the boundary currents in the reduced network. If the current injections are balanced, i.e., $\mathbf{1}_{n}^{\top} I = 0$ and $Q \equiv L$ has no shunt conductance, then $I_{\mathrm{red}}$ are also balanced, i.e., $\mathbf{1}_{|\alpha|}^{\top}I_{\mathrm{red}} = 0$, based on Lemma \ref{lem: Structure}.
\end{example}

The interior nodes can also be eliminated through a Kron reduction with multiple steps.
\begin{definition}[Iterative Kron Reduction]\label{def: Iterative Kron Reduction}
    Iterative Kron reduction associates to a loopy Laplacian matrix $\Q \in \R^{n \times n}$ and indices $\{1,\ldots, |\alpha|\}$, a sequence of matrices $\Q^l \in \R^{(n-l) \times (n-l)}, l \in \{1,\ldots, n-|\alpha|\}$, defined as
    \begin{align}
        \Q^l &= \Q^{l-1} / \Q^{l-1}[\{k_l\}, \{k_l\}] \label{Eq: Iterative KR}
    \end{align}
    where $\Q^0 = \Q$ and $k_l = n + 1 - l$.
\end{definition}

If $\alpha \subset \beta \subsetneq [n]$ and $\alpha$ is a reachable subset, $\beta$ is clearly also a reachable subset. 
Thus, the following lemma, which is a generalization of Lemma III.3 in \cite{dorfler2012kron},
follows directly from the quotient formula \cite{zhang2006schur} and Lemma \ref{lem: Structure}.
\begin{lemma}[\textit{Properties of  Iterative Kron Reduction}]\label{lem: Properties of Iterative KR}
    Let $\G = ([n], \E, \A)$ be a directed, and weighted graph and $\alpha \subsetneq [n]$ be a reachable subset with $|\alpha| \ge 2$. Consider the matrix sequence $\{\Q^l\}_{l=1}^{n-|\alpha|}$ defined through an iterative Kron reduction in Equation (\ref{Eq: Iterative KR}). The following statements hold:
    \begin{enumerate}
        \item \textbf{Well-posedness:} Each matrix $\Q^l$ is well-defined, and the classes of loopy and loop-less Laplacian matrices are closed throughout the iterative Kron reduction.
        \item \textbf{Quotient property:} The Kron-reduced matrix $\Q_{\mathrm{red}} = \Q/\Q[\alpha^c,\alpha^c]$ can be obtained thruough an iterative reduction of all interior nodes $k_l = n + 1 - l \in \alpha^c$, that is, $\Q_{\mathrm{red}} = \Q^{n - |\alpha|}$.
    \end{enumerate}
\end{lemma}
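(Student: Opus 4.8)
The plan is to prove both statements simultaneously by induction on $l$, showing that the $l$-th iterate coincides with the one-shot reduction that eliminates the first $l$ interior nodes. Writing $\beta_l := \{n+1-l, \ldots, n\} = \{k_1, \ldots, k_l\}$ for the set of nodes removed after $l$ steps (so that $\beta_0 = \emptyset$ and $\beta_{n-|\alpha|} = \alpha^c$), I would establish the invariant
\begin{align*}
    \Q^l = \Q / \Q[\beta_l, \beta_l], \qquad l = 0, 1, \ldots, n-|\alpha|.
\end{align*}
Statement $2)$ is then precisely the case $l = n-|\alpha|$, while statement $1)$ follows as soon as each right-hand side is exhibited as a well-defined loopy (respectively loop-less) Laplacian. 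The only two ingredients needed are the monotonicity of reachability and the Quotient Formula \cite{zhang2006schur}.

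First I would record the monotonicity remark quoted just before the lemma: the kept set $\beta_l^c = \{1, \ldots, n-l\}$ always contains $\alpha$, and $|\beta_l^c| = n-l \ge |\alpha| \ge 2$, so since $\alpha$ is reachable every $\beta_l^c$ is a reachable subset of $\G$. By Lemma \ref{lem: Well-definedness of Kron Reduction} this makes each removed block $\Q[\beta_l, \beta_l]$ invertible and each one-shot reduction $\Q/\Q[\beta_l, \beta_l]$ well-defined; by the closure part of Lemma \ref{lem: Structure} each such reduction is again a loopy Laplacian, and loop-less whenever $\Q$ is. These facts are established upfront, independently of the iteration, so once the invariant $\Q^l = \Q/\Q[\beta_l, \beta_l]$ is verified, statement $1)$ is immediate.

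To propagate the invariant I would apply the Quotient Formula to the nested removed sets $\beta_{l-1} \subset \beta_l = \beta_{l-1} \cup \{k_l\}$. Assuming inductively that $\Q^{l-1} = \Q/\Q[\beta_{l-1}, \beta_{l-1}]$, the formula identifies the pivot of the next step as the principal block
\begin{align*}
    \Q^{l-1}[\{k_l\}, \{k_l\}] = \Q[\beta_l, \beta_l] \,/\, \Q[\beta_{l-1}, \beta_{l-1}],
\end{align*}
and then yields $\Q^l = \Q^{l-1}/\Q^{l-1}[\{k_l\}, \{k_l\}] = \Q/\Q[\beta_l, \beta_l]$, advancing the induction. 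The scalar Schur determinant identity $\det \Q[\beta_l, \beta_l] = \det \Q[\beta_{l-1}, \beta_{l-1}] \cdot \Q^{l-1}[\{k_l\}, \{k_l\}]$, combined with the invertibility of both removed blocks from the previous paragraph, shows the $1 \times 1$ pivot $\Q^{l-1}[\{k_l\}, \{k_l\}]$ is nonzero, so each single-node elimination is legitimate.

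The step I expect to require the most care is exactly this nonvanishing of the scalar pivots $\Q^{l-1}[\{k_l\}, \{k_l\}]$: it is what guarantees well-posedness of the iterative process, yet it is not transparent from the reduced matrices at stage $l$ in isolation, since $\Q^{l-1}$ need not be an $M$-matrix. The resolution is that it cannot be read off locally but is forced globally, by pairing the invertibility of the full nested blocks $\Q[\beta_l, \beta_l]$ (a consequence of reachability monotonicity together with Lemma \ref{lem: Well-definedness of Kron Reduction}) with the determinant identity above. Once this nonvanishing is secured, the inductive identification of $\Q^l$ with $\Q/\Q[\beta_l, \beta_l]$, and hence both claimed statements, follow directly.
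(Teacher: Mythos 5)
Your proposal is correct and follows essentially the same route as the paper, which dispatches the lemma in one line via the monotonicity of reachable subsets, the Quotient Formula of \cite{zhang2006schur}, and Lemma \ref{lem: Structure}; your induction on $l$ with the invariant $\Q^l = \Q/\Q[\beta_l,\beta_l]$ and the Schur determinant identity for the scalar pivots is simply a careful writing-out of that same argument.
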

\begin{proof}
    We prove statement $1)$ based on an induction of $l \in \{1, \ldots, n-|\alpha|\}$.
    Since $[n-1] \supset \alpha$ is a reachable subset of $\G$, $\Q^{1} = \Q / \Q[\{n\}, \{n\}]$ is well-defined.
    Suppose $\Q^{l}$ $(1 \leq l \leq n-|\alpha|-1)$ is well-defined and $\Q^l = \Q / \Q[[n-l]^c, [n-l]^c]$. 
    Since $[n-l-1] \supset \alpha$ is a reachable subset of $\G$, $\Q / \Q[[n-l-1]^c, [n-l-1]^c]$ is well-defined.
    According to the quotient formula, $\Q[[n-l-1]^c, [n-l-1]^c] / \Q[[n-l]^c, [n-l]^c]$ is a nonsingular principal submatrix of $\Q / \Q[[n-l]^c, [n-l]^c]$, and
    \begin{align*}
        & \Q / \Q[[n-l-1]^c, [n-l-1]^c] \\
        &= (\Q / \Q[[n-l]^c, [n-l]^c]) / (\Q[[n-l-1]^c, [n-l-1]^c] / \Q[[n-l]^c, [n-l]^c])\\
        &= \Q^l / \Q^l[\{n-l\}, \{n-l\}]\\
        &= \Q^{l+1},
    \end{align*}
    where we used the induction hypothesis $\Q^l = \Q / \Q[[n-l]^c, [n-l]^c]$.
    Thus, $\Q^{l+1}$ is well-defined and $\Q^{l+1} = \Q / \Q[[n-l-1]^c, [n-l-1]^c]$.
    Therefore, $\Q^l$ is well-defined. The latter statement follows from Lemma \ref{lem: Structure}. Hence, statement $1)$ also follows.
    In particular, when we take $l$ as $n-|\alpha|$, then we obtain $\Q^{n-|\alpha|} = \Q / \Q[\alpha^c, \alpha^c] = \Q_{\mathrm{red}}$.
    Thus, statement $2)$ follows.
\end{proof}
\subsection{Algebraic and topological properties}
\label{subsec: Algebraic and topological properties}
In this subsection, we investigate the algebraic properties of a Kron reduction of a directed graph.
\begin{theorem}[\textit{Algebraic Properties of Kron Reduction}]\label{thm: Algebraic properties}
    Let $\G = ([n], \E, \A)$ be a directed and weighted graph and $\alpha \subsetneq [n]$ be a reachable subset with $|\alpha| \ge 2$. The following statements hold:
    \begin{enumerate}
        \item \textbf{Monotonic increase of weights:} For all $i, j \in \alpha$, it holds that $\A_{\mathrm{red}, i j} \geq \A_{i j}$. Correspondingly, it holds that $\Q_{\mathrm{red}, i j} \leq \A_{i j}$ for all distinct $i, j \in \alpha$.
        \item \textbf{Effect of self-loops:} Loopy and loop-less Laplacians $\Q$ and $\mathcal{L}$ are related by $\Q = \mathcal{L} + \mathrm{diag}(\{\A_{i i}\}_{i \in [n]})$. The Kron-reduced matrix then takes the following form:
        \begin{align*}
            \Q_{\mathrm{red}} &= \mathcal{L}/\mathcal{L}[\alpha^c, \alpha^c] + \mathrm{diag}(\{\A_{i i}\}_{i \in \alpha}) + \mathcal{S},
        \end{align*}
        where $\mathcal{S} = \mathcal{L}_{\mathrm{l a c}}(I_{n - |\alpha|} + \mathrm{diag}(\{\A_{i i}\}_{i \in \alpha^c})\mathcal{L}[\alpha^c, \alpha^c]^{-1})^{-1} \mathrm{diag}(\{\A_{i i}\}_{i \in \alpha^c})\mathcal{L}_{\mathrm{r a c}}$ is a non-negative matrix.
        Furthermore,
        \begin{align*}
            \A_{\mathrm{red}, i i} &= \A_{i i} + \sum_{j=1}^{n-|\alpha|} \Q_{\mathrm{lac}, i j} \A_{|\alpha|+j, |\alpha|+j}.
        \end{align*}
    \end{enumerate}
\end{theorem}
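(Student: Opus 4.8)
The plan is to reduce both statements to the Schur-complement formula and the sign structure of the accompanying matrices from Lemma~\ref{lem: Structure}. Write $D_\alpha := \diag(\{\A_{ii}\}_{i\in\alpha})$, $D_{\alpha^c} := \diag(\{\A_{ii}\}_{i\in\alpha^c})$, and $X := \mathcal{L}[\alpha^c,\alpha^c]$. Since $\Q$ and $\mathcal{L}$ differ only on the diagonal, the off-diagonal blocks agree, $\Q[\alpha,\alpha^c]=\mathcal{L}[\alpha,\alpha^c]$ and $\Q[\alpha^c,\alpha]=\mathcal{L}[\alpha^c,\alpha]$, while $\Q[\alpha^c,\alpha^c]=X+D_{\alpha^c}$ and $\Q[\alpha,\alpha]=\mathcal{L}[\alpha,\alpha]+D_\alpha$. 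For the off-diagonal part of statement~1, fix distinct $i,j\in\alpha$; then $\Q_{ij}=-\A_{ij}$, and writing the correction term through the left accompanying matrix gives $\Q[\alpha,\alpha^c]\Q[\alpha^c,\alpha^c]^{-1}\Q[\alpha^c,\alpha]=-\Q_{\mathrm{lac}}\Q[\alpha^c,\alpha]$. Because $\Q_{\mathrm{lac}}\ge 0$ by Lemma~\ref{lem: Structure} and $\Q[\alpha^c,\alpha]$ has entries $-\A_{ki}\le 0$, this term is entrywise nonnegative, so $\A_{\mathrm{red},ij}=-\Q_{\mathrm{red},ij}=\A_{ij}+(\Q_{\mathrm{lac}}(-\Q[\alpha^c,\alpha]))_{ij}\ge\A_{ij}$; this also gives $\Q_{\mathrm{red},ij}\le -\A_{ij}\le\A_{ij}$. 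The diagonal case $\A_{\mathrm{red},ii}\ge\A_{ii}$ will be immediate from the self-loop formula below.

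The heart of statement~2 is a difference-of-inverses computation. Subtracting $D_\alpha$ and $\mathcal{L}/\mathcal{L}[\alpha^c,\alpha^c]$ from $\Q_{\mathrm{red}}$ and using the block identifications collapses the remainder to $\mathcal{S}=\mathcal{L}[\alpha,\alpha^c](X^{-1}-(X+D_{\alpha^c})^{-1})\mathcal{L}[\alpha^c,\alpha]$. The key algebraic step is the resolvent-type identity $X^{-1}-(X+D_{\alpha^c})^{-1}=X^{-1}(I+D_{\alpha^c}X^{-1})^{-1}D_{\alpha^c}X^{-1}$, which I would derive from $I+D_{\alpha^c}X^{-1}=(X+D_{\alpha^c})X^{-1}$, hence $(I+D_{\alpha^c}X^{-1})^{-1}=X(X+D_{\alpha^c})^{-1}$. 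Absorbing the outer factors via $\mathcal{L}[\alpha,\alpha^c]X^{-1}=-\mathcal{L}_{\mathrm{lac}}$ and $X^{-1}\mathcal{L}[\alpha^c,\alpha]=-\mathcal{L}_{\mathrm{rac}}$ reproduces exactly the claimed form $\mathcal{S}=\mathcal{L}_{\mathrm{lac}}(I_{n-|\alpha|}+D_{\alpha^c}X^{-1})^{-1}D_{\alpha^c}\mathcal{L}_{\mathrm{rac}}$.

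To handle nonnegativity and the self-loop weights I would simplify one step further. Using $(I+D_{\alpha^c}X^{-1})^{-1}=X(X+D_{\alpha^c})^{-1}=X\Q[\alpha^c,\alpha^c]^{-1}$ and $\mathcal{L}_{\mathrm{lac}}X=-\mathcal{L}[\alpha,\alpha^c]$ shows $\mathcal{L}_{\mathrm{lac}}(I_{n-|\alpha|}+D_{\alpha^c}X^{-1})^{-1}=-\mathcal{L}[\alpha,\alpha^c]\Q[\alpha^c,\alpha^c]^{-1}=\Q_{\mathrm{lac}}$, whence $\mathcal{S}=\Q_{\mathrm{lac}}D_{\alpha^c}\mathcal{L}_{\mathrm{rac}}$ is a product of three entrywise nonnegative matrices (using $\Q_{\mathrm{lac}},\mathcal{L}_{\mathrm{rac}}\ge0$ and $D_{\alpha^c}\ge0$ from Lemma~\ref{lem: Structure}), so $\mathcal{S}\ge0$. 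For the self-loops, since $\Q_{\mathrm{red}}$ is a loopy Laplacian (Lemma~\ref{lem: Structure}), $\A_{\mathrm{red},ii}=(\Q_{\mathrm{red}}\mathbf{1}_{|\alpha|})_i$; evaluating termwise, $\mathcal{L}/\mathcal{L}[\alpha^c,\alpha^c]$ is loop-less and kills $\mathbf{1}_{|\alpha|}$, $D_\alpha\mathbf{1}_{|\alpha|}$ contributes $\A_{ii}$, and the row-stochasticity of $\mathcal{L}_{\mathrm{rac}}$ gives $\mathcal{S}\mathbf{1}_{|\alpha|}=\Q_{\mathrm{lac}}D_{\alpha^c}\mathbf{1}_{n-|\alpha|}$, which in coordinate $i$ is precisely $\sum_{j}\Q_{\mathrm{lac},ij}\A_{|\alpha|+j,|\alpha|+j}$. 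This yields the stated formula and, since the sum is nonnegative, the diagonal case of statement~1.

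I expect the main obstacle to be the bookkeeping around the difference-of-inverses identity and the repeated conversions between the loopy accompanying matrix $\Q_{\mathrm{lac}}$ and the loop-less matrices $\mathcal{L}_{\mathrm{lac}},\mathcal{L}_{\mathrm{rac}}$: getting the claimed form of $\mathcal{S}$ requires keeping the factor $D_{\alpha^c}$ on the correct side of $(I_{n-|\alpha|}+D_{\alpha^c}X^{-1})^{-1}$, and the clean nonnegativity argument depends on recognizing that $\mathcal{L}_{\mathrm{lac}}(I_{n-|\alpha|}+D_{\alpha^c}X^{-1})^{-1}$ collapses back to $\Q_{\mathrm{lac}}$. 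Everything else is routine substitution into the Schur complement.
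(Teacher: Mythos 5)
Your argument is correct and is essentially the proof the paper intends: the paper itself only cites the analogous Theorem III.6 of \cite{dorfler2012kron}, and your decomposition $\Q_{\mathrm{red}}=\mathcal{L}/\mathcal{L}[\alpha^c,\alpha^c]+\mathrm{diag}(\{\A_{ii}\}_{i\in\alpha})+\mathcal{S}$ via the resolvent identity, followed by the collapse $\mathcal{S}=\Q_{\mathrm{lac}}\,\mathrm{diag}(\{\A_{ii}\}_{i\in\alpha^c})\,\mathcal{L}_{\mathrm{rac}}$, is exactly that argument carried over to the directed setting. All the facts you invoke (nonnegativity of $\Q_{\mathrm{lac}}$ and $\mathcal{L}_{\mathrm{rac}}$, row stochasticity of $\mathcal{L}_{\mathrm{rac}}$, invertibility of $\mathcal{L}[\alpha^c,\alpha^c]$, and loop-lessness of $\mathcal{L}/\mathcal{L}[\alpha^c,\alpha^c]$) are indeed supplied by Lemmas~\ref{lem: Well-definedness of Kron Reduction} and~\ref{lem: Structure} for a reachable subset, so the proof goes through as written.
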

The proof of these statements follows analogously to the proof of Theorem I\hspace{-.1em}I\hspace{-.1em}I.6 of \cite{dorfler2012kron}. Note that if $\alpha = [n-1]$, then for all distinct $i, j \in [n-1]$,
\begin{align}
    \A_{\mathrm{red}, i j}&= \A_{i j} + \frac{\A_{i n}\A_{n j}}{\Q_{n n}} \geq \A_{i j}\label{Eq: off-diagonal},\\
        \A_{\mathrm{red}, i i}&= \A_{i i} + \frac{\A_{i n} \A_{n n}}{\Q_{n n}} \geq \A_{i i}\label{Eq: diagonal}.
\end{align}

The following theorem, which is a generalization to Theorem III.4 of \cite{dorfler2012kron}, shows how the graph topology of $\G$ changes under a Kron reduction, which follows directly from the equations (\ref{Eq: off-diagonal}), (\ref{Eq: diagonal}), and Lemma \ref{lem: Properties of Iterative KR}.
\begin{theorem}[\textit{Topological Properties of Kron Reduction}]\label{thm: topological}
    Let $\G = ([n], \E, \A)$, $\G_{\mathrm{red}} = (\alpha, \E_{\mathrm{red}}, \A_{\mathrm{red}})$ be directed and weighted graphs associated with $\Q, \Q_{\mathrm{red}} = \Q/\Q[\alpha^c, \alpha^c]$, respectively, where $\alpha$ is a reachable subset with $|\alpha| \ge 2$. The following statements then hold:
    \begin{enumerate}
        \item \textbf{Edges:} There is an edge from $i \in \alpha$ to $j \in \alpha$ in $\G_{\mathrm{red}}$ if and only if there is a path from $i$ to $j$ in $\G$ whose nodes all belong to $\{i, j\} \cup \alpha^c$. In particular, if $\G$ is strongly connected, $\G_{\mathrm{red}}$ is also strongly connected. 
        \item \textbf{Self-loops:} A node $j \in \alpha$ features a self-loop in $\G_{\mathrm{red}}$ if and only if $j$ features a self-loop in $\G$ or there is a path from a loopy interior node $i \in \alpha^c$ to $j$ whose nodes all belong to $\{i, j\} \cup \alpha^c$.
    \end{enumerate}
\end{theorem}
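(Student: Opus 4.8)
The plan is to reduce the whole statement to a single interior-node elimination and then induct. By the Quotient property of Lemma~\ref{lem: Properties of Iterative KR}, the full reduction $\Q_{\mathrm{red}}=\Q^{n-|\alpha|}$ is obtained by peeling off one interior node at a time, and by its Well-posedness part each intermediate $\Q^l$ is again a well-defined loopy Laplacian on a reachable subset, so every pivot $\Q^l_{kk}$ used below is strictly positive. For the elimination of the last node $n$ (the case $\alpha=[n-1]$), equations~(\ref{Eq: off-diagonal}) and~(\ref{Eq: diagonal}) give the exact combinatorial rule: since $\Q_{nn}>0$, we have $\A_{\mathrm{red},ij}>0$ iff $\A_{ij}>0$ or ($\A_{in}>0$ and $\A_{nj}>0$), and $\A_{\mathrm{red},jj}>0$ iff $\A_{jj}>0$ or ($\A_{jn}>0$ and $\A_{nn}>0$). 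In graph terms, an edge $i\to j$ survives or is created exactly when $\G$ has an edge $i\to j$ or a length-two path $i\to n\to j$, i.e.\ a path from $i$ to $j$ through $\{i,j\}\cup\{n\}$; and a self-loop at $j$ appears exactly when $j$ already had one or $j$ has an edge to the loopy node $n$. These are precisely parts~1 and~2 when $\alpha^c=\{n\}$.

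For part~1 in general I would induct on the number $l$ of eliminated nodes, with the hypothesis that in $\G^l$ (obtained by eliminating $n,n-1,\dots,n-l+1$) there is an edge $i\to j$ iff $\G$ has a path from $i$ to $j$ all of whose nodes lie in $\{i,j\}\cup\{n-l+1,\dots,n\}$. Passing to $\G^{l+1}$ by eliminating $k=n-l$, the single-step rule gives an edge $i\to j$ in $\G^{l+1}$ iff there is an edge $i\to j$ in $\G^l$, or edges $i\to k$ and $k\to j$ in $\G^l$; expanding each occurrence via the hypothesis and inserting $k$ between the two pieces yields the claim for the enlarged set $\{i,j\}\cup\{n-l,\dots,n\}$. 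The forward direction uses that the concatenation of the two paths is a walk whose nodes lie in the enlarged set, and any walk contains a simple path with the same endpoints and a subset of its nodes. The backward direction either uses a given simple $\G$-path directly (when it avoids $k$) or splits it at its unique visit to $k$; simplicity guarantees the $i\to k$ piece avoids $j$ and the $k\to j$ piece avoids $i$, so each piece lands in the node set required by the hypothesis. Taking $l=n-|\alpha|$ gives part~1.

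The strong-connectivity addendum then follows without new work: given strongly connected $\G$ and distinct $i,j\in\alpha$, take any $\G$-path from $i$ to $j$, list the boundary nodes it meets in order as $i=b_0,b_1,\dots,b_m=j$, note that each segment $b_t\to b_{t+1}$ has all intermediate nodes in $\alpha^c$ and is therefore an edge of $\G_{\mathrm{red}}$ by part~1, and concatenate these edges into a path in $\G_{\mathrm{red}}$. Part~2 is proved by the same induction, run \emph{in tandem} with part~1, because the self-loop rule at each step invokes an edge of $\G^l$. The hypothesis is that $j$ carries a self-loop in $\G^l$ iff $\G$ has a path from $j$ to a node that is loopy in $\G$, with intermediate nodes in $\{n-l+1,\dots,n\}$ and terminal loopy node in $\{j\}\cup\{n-l+1,\dots,n\}$ (the degenerate path accounting for $j$ being loopy in $\G$). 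Eliminating $k=n-l$, the single-step self-loop rule creates a loop at $j$ in $\G^{l+1}$ iff $j$ is already loopy in $\G^l$, or there is an edge $j\to k$ in $\G^l$ with $k$ loopy in $\G^l$; feeding in part~1 for the edge $j\to k$ and the hypothesis for the loop at $k$, and concatenating at $k$ as before, upgrades the node set to $\{n-l,\dots,n\}$. At $l=n-|\alpha|$ this shows a self-loop at $j\in\alpha$ in $\G_{\mathrm{red}}$ occurs iff $j$ is loopy in $\G$, or $j$ reaches a loopy interior node through a path in $\{j\}\cup\alpha^c$.

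The main obstacle I anticipate is not any single inequality but the careful bookkeeping of the tracked node sets across the iteration: making the walk-to-path reduction (forward) and the simple-path splitting at the freshly eliminated pivot (backward) precise so that the endpoints never collide with the interior vertices being tracked, and coupling the two inductions so that the edge characterization of $\G^l$ is available exactly where the self-loop step needs it. The Well-posedness clause of Lemma~\ref{lem: Properties of Iterative KR} is what keeps every pivot $\Q^l_{kk}$ positive, so the single-step formulas, and hence the strict-positivity equivalences that drive the entire argument, remain valid at each stage.
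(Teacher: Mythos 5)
Your proposal is exactly the argument the paper intends: the paper's entire justification is the single sentence that the theorem ``follows directly from the equations (\ref{Eq: off-diagonal}), (\ref{Eq: diagonal}), and Lemma \ref{lem: Properties of Iterative KR},'' and you have filled in precisely that outline --- the single-pivot positivity rule, the induction over iterative eliminations with the walk-to-path extraction and the split at the freshly eliminated pivot, and the concatenation argument for strong connectivity. The reduction to one elimination step, the coupling of the edge and self-loop inductions, and the use of well-posedness to keep $\Q^l_{kk}>0$ are all sound.

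One point you should not pass over silently: for part 2 your single-step rule is $\A_{\mathrm{red},jj}>0$ iff $\A_{jj}>0$ or ($\A_{jn}>0$ and $\A_{nn}>0$), so the loop propagates along an edge \emph{from} $j$ \emph{to} the loopy node $n$, and your induction correctly concludes that $j$ is loopy in $\G_{\mathrm{red}}$ iff $j$ is loopy in $\G$ or $j$ \emph{reaches} a loopy interior node through $\{i,j\}\cup\alpha^c$. That is the reverse of the literal wording of statement 2, which asks for a path \emph{from} the loopy interior node $i$ \emph{to} $j$. The two are not equivalent, and the paper's own six-node example settles which is right: there the interior node $4$ is loopy and has the edge $4\to 3$, so the theorem as literally worded would force a self-loop at $3$ in $\G_{\mathrm{red}}$, yet the computed $\Q_{\mathrm{red}}$ has zero row sums and hence no self-loops; no boundary node can reach node $4$, consistent with your version. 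So your derivation is correct and the stated direction in the theorem is a typo, but your write-up asserts ``these are precisely parts 1 and 2'' without noticing the mismatch --- you should state explicitly that you are proving the corrected orientation of statement 2.
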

\begin{figure}[hbtp]
 \centering
 \includegraphics[keepaspectratio, scale=0.30]{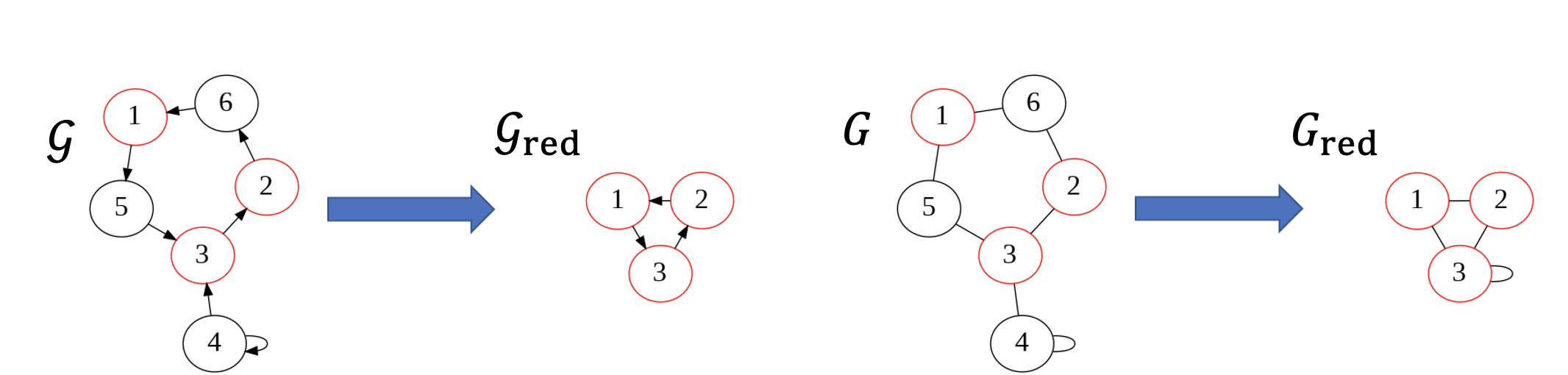}
 \caption{Illustration of Kron reduction of a directed graph (left) and the corresponding undirected graph (right) with three boundary nodes \textcolor{black}{$\{1,2,3\}$} and three interior nodes \textcolor{black}{$\{4,5,6\}$} (the edge weights were omitted for simplicity).}
 \label{fig: preservation of a self loop}
\end{figure}
\begin{example}
    Consider a directed graph $\G$ and corresponding undirected graph $G$ with six vertices and three boundary nodes \textcolor{black}{$\{1,2,3\}$}, as shown in Fig. \ref{fig: preservation of a self loop}.
    The loopy Laplacian matrices of $\G$ and $G$ are
    \begin{align*}
        \Q &= \begin{bmatrix}1&0&0&0&-1&0\\0&1&0&0&0&-1\\0&-1&1&0&0&0\\0&0&-1&2&0&0\\0&0&-1&0&1&0\\-1&0&0&0&0&1\end{bmatrix}
        \,\, \mathrm{and}\,\, Q = \begin{bmatrix}2&0&0&0&-1&-1\\0&2&-1&0&0&-1\\0&-1&3&-1&-1&0\\0&0&-1&2&0&0\\-1&0&-1&0&2&0\\-1&-1&0&0&0&2\end{bmatrix},
    \end{align*}
    respectively. Then, the Kron-reduced matrices of $\G_{\mathrm{red}}$ and $G_{\mathrm{red}}$ are given by
    \begin{align*}
        \Q_{\mathrm{red}}
        &= \begin{bmatrix}1&0&-1\\-1&1&0\\0&-1&1\end{bmatrix}\,\,
        \mathrm{and}\,\,Q_{\mathrm{red}} = 
        \begin{bmatrix}1&-\frac{1}{2}&-\frac{1}{2}\\-\frac{1}{2}&\frac{3}{2}&-1\\-\frac{1}{2}&-1&2\end{bmatrix},
    \end{align*}
    respectively.
     This example illustrates that even if the original directed graph has a self-loop, the reduced graph may feature no self-loops, whereas the Kron reduction of an undirected graph preserves the existence of a self-loop.
\end{example}

\subsection{Weight balanced directed graph}
\label{subsec: Weight balanced directed graph}
In this subsection, to study the properties of effective resistance, we introduce a \textit{weight balanced directed graph}, which is a class of directed graphs that will be used later in Section \ref{sec: Effective resistance of directed graphs}.
\begin{definition}\label{def: Weight balanced directed graph}
    A directed graph $\G = ([n], \E, \A)$ is said to be weight balanced if the in-degree and out-degree coincide for each node, i.e.,
    \begin{align*}
        \mathcal{D}_{i i} &= \sum_{j=1}^n \A_{i j} = \sum_{j=1}^n \A_{j i}, \,\, i \in [n].
    \end{align*}
\end{definition}
If $\mathcal{L} \in \R^{n \times n}$ is a loop-less Laplacian of a weight balanced directed graph, then
\begin{align*}
    \mathbf{1}_n^{\top} \mathcal{L} &= \left(\mathcal{D}_{1 1} - \sum_{j=1}^n \A_{j 1}, \ldots, \mathcal{D}_{n n} - \sum_{j=1}^n \A_{j n}\right) = \mathbf{0}_n^{\top}.
\end{align*}
Every undirected graph is clearly weight balanced.
Thus, the class of weight balanced directed graphs is a generalization of the class of undirected graphs, which plays an important role in the consensus coordination of multi-agent systems \cite{cortes2008distributed}.

The following theorem shows that a Kron reduction preserves the weight balance.
\begin{theorem}\label{thm: preservation of balanced}
    Let $\G = ([n], \E, \A)$ be a weight balanced directed graph and $\alpha \subsetneq [n]$ be a \textcolor{black}{reachable} subset with $|\alpha| \ge 2$.
    Suppose $\A_{i i} = 0$ for all $i \in \alpha^c$.
    Then, the Kron-reduced graph $\G_{\mathrm{red}} = (\alpha, \E_{\mathrm{red}}, \A_{\mathrm{red}})$ is also weight balanced.
\end{theorem}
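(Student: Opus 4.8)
The plan is to collapse the statement to a single column-sum identity for the loop-less part of the reduced Laplacian and then verify it by a short block computation. Write $\mathcal{L}_{\mathrm{red}} := \mathcal{L}/\mathcal{L}[\alpha^c,\alpha^c]$. Since $\A_{ii}=0$ for every interior node $i \in \alpha^c$, we have $\mathrm{diag}(\{\A_{ii}\}_{i\in\alpha^c}) = \mathbf{0}$, so $\Q[\alpha^c,\alpha^c] = \mathcal{L}[\alpha^c,\alpha^c]$ and the correction matrix $\mathcal{S}$ of Theorem~\ref{thm: Algebraic properties} vanishes. Hence $\Q_{\mathrm{red}} = \mathcal{L}_{\mathrm{red}} + \mathrm{diag}(\{\A_{ii}\}_{i\in\alpha})$, so $\Q_{\mathrm{red}}$ and $\mathcal{L}_{\mathrm{red}}$ agree off the diagonal, the boundary self-loops affecting only the diagonal. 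Because a self-loop $\A_{\mathrm{red},ii}$ contributes equally to the in-degree and the out-degree of node $i$, it cancels in the weight-balance condition, which therefore concerns only the off-diagonal weights. Since $\mathcal{L}_{\mathrm{red}}$ is a loop-less Laplacian by the closure property (Lemma~\ref{lem: Structure}), its row sums $\mathcal{L}_{\mathrm{red}}\mathbf{1}_{|\alpha|}=\mathbf{0}_{|\alpha|}$ vanish automatically, so that $\G_{\mathrm{red}}$ is weight balanced if and only if the column sums also vanish, i.e.
\begin{align*}
  \mathbf{1}_{|\alpha|}^{\top}\mathcal{L}_{\mathrm{red}} = \mathbf{0}_{|\alpha|}^{\top}.
\end{align*}

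Next I would partition the original loop-less Laplacian into blocks,
\begin{align*}
  \mathcal{L} = \begin{bmatrix} A & B \\ C & D \end{bmatrix}, \quad A = \mathcal{L}[\alpha,\alpha],\ B = \mathcal{L}[\alpha,\alpha^c],\ C = \mathcal{L}[\alpha^c,\alpha],\ D = \mathcal{L}[\alpha^c,\alpha^c],
\end{align*}
so that $\mathcal{L}_{\mathrm{red}} = A - B D^{-1} C$, where $D$ is invertible because the reduction is well-defined (Lemma~\ref{lem: Well-definedness of Kron Reduction}). Weight balancedness of $\G$ gives $\mathbf{1}_n^{\top}\mathcal{L} = \mathbf{0}_n^{\top}$ (as noted after Definition~\ref{def: Weight balanced directed graph}), whose two block components read
\begin{align*}
  \mathbf{1}_{|\alpha|}^{\top} A + \mathbf{1}_{|\alpha^c|}^{\top} C = \mathbf{0}_{|\alpha|}^{\top}, \qquad \mathbf{1}_{|\alpha|}^{\top} B + \mathbf{1}_{|\alpha^c|}^{\top} D = \mathbf{0}_{|\alpha^c|}^{\top}.
\end{align*}

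Finally I would substitute these into $\mathbf{1}_{|\alpha|}^{\top}\mathcal{L}_{\mathrm{red}}$; using $\mathbf{1}_{|\alpha|}^{\top}B = -\mathbf{1}_{|\alpha^c|}^{\top}D$ followed by $D D^{-1} = I$,
\begin{align*}
  \mathbf{1}_{|\alpha|}^{\top}\mathcal{L}_{\mathrm{red}} = \mathbf{1}_{|\alpha|}^{\top}A - \mathbf{1}_{|\alpha|}^{\top}B D^{-1} C = \mathbf{1}_{|\alpha|}^{\top}A + \mathbf{1}_{|\alpha^c|}^{\top} D D^{-1} C = \mathbf{1}_{|\alpha|}^{\top}A + \mathbf{1}_{|\alpha^c|}^{\top} C = \mathbf{0}_{|\alpha|}^{\top},
\end{align*}
the last equality being the first block relation. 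This yields $\mathbf{1}_{|\alpha|}^{\top}\mathcal{L}_{\mathrm{red}} = \mathbf{0}_{|\alpha|}^{\top}$ and hence the theorem.

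The computation is routine once the setup is fixed, so I expect the only real subtlety to lie in the first paragraph: justifying that the boundary self-loops are irrelevant (via $\mathcal{S}=\mathbf{0}$ and the cancellation of $\A_{\mathrm{red},ii}$ in the balance condition), so that weight balancedness is \emph{exactly} the column-sum identity $\mathbf{1}^{\top}\mathcal{L}_{\mathrm{red}}=\mathbf{0}^{\top}$, the complementary row-sum identity being automatic. Everything else follows from translating the two degree identities $\mathcal{L}\mathbf{1}_n = \mathbf{0}_n$ and $\mathbf{1}_n^{\top}\mathcal{L} = \mathbf{0}_n^{\top}$ into block form.
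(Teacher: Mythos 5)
Your proof is correct, and it takes a genuinely different route from the paper's. The paper invokes the quotient property of iterative Kron reduction to reduce to the case $\alpha=[n-1]$, and then verifies balance at each single-node elimination by the explicit rank-one update formulas (\ref{Eq: off-diagonal})--(\ref{Eq: diagonal}), using the symmetry of $\A_{in}\A_{nj}/\Q_{nn}$ under the original balance condition. You instead prove the column-sum identity $\mathbf{1}_{|\alpha|}^{\top}\mathcal{L}_{\mathrm{red}}=\mathbf{0}_{|\alpha|}^{\top}$ directly by a single block computation from $\mathbf{1}_n^{\top}\mathcal{L}=\mathbf{0}_n^{\top}$ — the transpose of the standard argument that a Schur complement of a Laplacian has zero row sums. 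Your route eliminates all interior nodes at once and so avoids the induction; it also sidesteps a point the paper's reduction silently relies on, namely that the intermediate graphs remain weight balanced and that the remaining interior nodes stay loop-free throughout the iteration (the latter holds because $\A_{\mathrm{red},ii}=\A_{ii}+\A_{in}\A_{nn}/\Q_{nn}=\A_{ii}$ when $\A_{nn}=0$). Your first paragraph — isolating the boundary self-loops via $\mathcal{S}=\mathbf{0}$, noting they cancel in the in-/out-degree comparison, and using the automatic vanishing of the row sums of the loop-less reduced Laplacian to reduce balance to the column-sum identity — is exactly the care needed to make the reduction to $\mathbf{1}^{\top}\mathcal{L}_{\mathrm{red}}=\mathbf{0}^{\top}$ legitimate, and it is handled correctly. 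What the paper's approach buys is reuse of formulas (\ref{Eq: off-diagonal})--(\ref{Eq: diagonal}) already established for Theorem \ref{thm: Algebraic properties}; what yours buys is a shorter, more transparent argument that generalizes immediately (it shows any left null vector structure of the form $v^{\top}\mathcal{L}=\mathbf{0}^{\top}$ descends to the Schur complement).
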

\begin{proof}
    According to Lemma \ref{lem: Properties of Iterative KR}, it suffices to prove the theorem in the case when $\alpha = [n-1]$. For $i \in [n-1]$,
    \begin{align*}
        \sum_{j=1}^{n-1}\A_{\mathrm{red}, i j} 
        &= \sum_{j \in [n-1] \setminus \{i\}}\left(\A_{i j} + \frac{\A_{i n} \A_{n j}}{\Q_{n n}}\right) + \A_{i i} + \frac{\A_{i n}\A_{n n}}{\Q_{n n}}\\
        &= \mathcal{D}_{i i} - \frac{\A_{i n} \A_{n i}}{\mathcal{D}_{n n}}\\
        &= \sum_{j \in [n-1] \setminus \{i\}}\left(\A_{j i} + \frac{\A_{j n} \A_{n i}}{\Q_{n n}}\right) + \A_{i i} + \frac{\A_{i n}\A_{n n}}{\Q_{n n}}\\
        &= \sum_{j=1}^{n-1}\A_{\mathrm{red}, j i} 
    \end{align*}
    where we used equalities (\ref{Eq: off-diagonal}), (\ref{Eq: diagonal}), $\A_{n n} = 0$, and $\mathcal{Q}_{n n} = \mathcal{D}_{n n} = \sum_{j=1}^n \A_{i j} = \sum_{j=1}^n \A_{j i}$.
\end{proof}
\section{Effective resistance of directed graphs}
\label{sec: Effective resistance of directed graphs}
In this section, drawing inspiration from the relationship between electrical networks and Markov chains, we propose a generalized definition of the effective resistance applied to a directed graph.
\subsection{Undirected case}
\label{subsec: Undirected case}
In this subsection, we recall basic facts regarding the effective resistance of an undirected graph.
\begin{definition}(see  \cite{dorfler2012kron} \cite{klein1993resistance})\label{def: ER of undirected}
        Let $G = ([n], E, A)$ be an undirected, connected, and weighted graph, and let $Q$ be the corresponding loopy Laplacian.
        Then, the effective resistance $R_{G}(a, b)$ and the effective conductance $C_{G}(a, b)$ between two nodes $a, b \in [n]$ are
        \begin{align*}\label{Eq: def_of_EL_of_UD}
            R_{G}(a, b) &:= (e_a - e_b)^{\top}Q^{\dagger}(e_a - e_b) = Q^{\dagger}_{a a} + Q^{\dagger}_{b b} - Q^{\dagger}_{a b}- Q^{\dagger}_{b a},\\
            C_{G}(a, b) &:= \frac{1}{R_{G}(a, b)},
        \end{align*}
        where $Q^{\dagger}$ is the Moore-Penrose pseudoinverse of $Q$.
\end{definition}

\textcolor{black}{These concepts are derived from the representation of a graph as a network of resistors, such as in Example \ref{ex: Electrical}.
The effective conductance $C_{G}(a, b)$ of a loop-less graph is calculated as the amount of flow from nodes $a$ to $b$ when we fix their potential $V$ to $1$ and $0$, respectively, which is mentioned in \cite{loebl2010discrete}.
Here, the nodes $a$ and $b$ are the only vertices at which the current can leave or enter the circuit.
The effective resistance $R_G(a,b)$ is defined as $R_G(a,b)=1/C_G(a,b)$.
In this case, the current-balance equations are $C_{G}(a, b)(e_a - e_b) = Q V$. In particular,
\begin{align}
    C_G(a,b) &= (Q V)_a = \sum_{x=1}^n Q_{ax}V_x = \sum_{x=1}^n A_{a x}(V_a - V_x)\label{Eq: EC}.
\end{align}
If $G$ is loop-less, then the general solution to the current-balance equations is $V = C_G(a,b) Q^{\dagger}(e_a - e_b) + c\mathbf{1}_n$, where $c \in \R$ is an arbitrary constant. Then,
\begin{align*}
    R_G(a,b) &= R_G(a,b)(V_a - V_b)\\
    &= R_G(a,b)(e_a-e_b)^{\top}V \\
    &= R_G(a,b)(e_a - e_b)^{\top}(C_G(a,b) Q^{\dagger}(e_a - e_b)+c\mathbf{1}_n)\\
    &= (e_a - e_b)^{\top}Q^{\dagger}(e_a - e_b),
\end{align*}
where we used the assumption that $V_a=1$ and $V_b = 0$. Thus, we recover Definition \ref{def: ER of undirected}.
}

As shown in Theorem I\hspace{-.1em}I\hspace{-.1em}I.8 of \cite{dorfler2012kron}, the effective resistances between boundary nodes are invariant under a Kron reduction of the interior nodes.
\subsection{Novel definition of effective resistance}
In this subsection, we define the effective conductance of directed graphs based on the physical interpretation.

Let $\G = ([n], \E, \A)$ be a directed graph and fix distinct $a, b \in [n]$. 
To define the effective conductance $C_{\G}(a, b)$, we assume the following assumptions:
\begin{itemize}
    \item $\{a, b\} \subset [n]$ is a reachable subset.
    \item $\mathcal{D}_{i i} > 0$ for $i \in \{a, b\}$. If there are no outgoing edges from $i$, then we add a virtual self-loop $\A_{i i} = 1$.
\end{itemize}
Under these assumptions, $\mathcal{D}_{i i} > 0$ for all $i \in [n]$.
Then, we define the corresponding transition matrix using $P := \mathcal{D}^{-1}\A$, which induces a Markov chain $X = \{X_m\}_{m=0}^{\infty}$ with the probability transition matrix $P$ on $[n]$, that is,
\begin{align*}
    \mathrm{Prob}(X_{m+1} = j | X_{m} = i) &= P_{i j} = \frac{\A_{i j}}{\mathcal{D}_{i i}}
\end{align*}
for all $i,j \in [n]$ and $m \in \N$.
Moreover, define $V_{a\to b}(x) := P^{x}(\sigma_a < \sigma_b)$ for all $x \in [n]$, where $P^x$ represents the probability law of $X$ given $X_0 = x$ and $\sigma_a$ indicates the hitting time of $a \in [n]$ by
\begin{align*}
    \sigma_{a} &:= \inf\{m \geq 0 : X_{m} = a\} \in [0, \infty].
\end{align*}
In other words, $V_{a \to b}(x)$ is the probability that a chain started at $x$ reaches $a$ before reaching $b$. Clearly, $V_{a \to b}(a) = 1, V_{a \to b}(b) = 0$. If $x$ is neither $a$ nor $b$,
based on the Markov property, we obtain
\begin{align}
    V_{a \to b}(x) &= \sum_{y=1}^n P_{x y} V_{a \to b}(y) = \sum_{y \in \{a, b\}^c}P_{x y}V_{a \to b}(y) + P_{x a}.\label{Eq: Markov property}
\end{align}
Note that if $\G$ is undirected and loop-less, then $V_{a \to b}(x)$ is the voltage $V_x$ when the unit voltage with $V_a = 1$ and $V_b = 0$ is applied between $a$ and $b$ \cite{doyle1984random}.
In this sense, $V_{a \to b}(x)$ is an extension of the voltage concept. 
\begin{lemma}\label{lem: well-definedness of V}
    $V_{a \to b} = (V_{a \to b}(1), \ldots, V_{a \to b}(n))^{\top} \in \R^n$ is uniquely determined.
\end{lemma}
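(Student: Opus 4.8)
The plan is to observe that each entry $V_{a\to b}(x) = P^x(\sigma_a < \sigma_b)$ is, by construction, a genuine probability and hence a well-defined real number; so existence is free and the substance of the lemma is \emph{uniqueness}. Concretely, I would show that $V_{a\to b}$ is the unique vector satisfying the two boundary conditions $V_{a\to b}(a) = 1$, $V_{a\to b}(b) = 0$ together with the harmonicity relation (\ref{Eq: Markov property}) at every interior node $x \in \{a,b\}^c$. The first step is therefore to fix the two boundary values and collect the remaining unknowns $u := (V_{a\to b}(x))_{x \in \{a,b\}^c} \in \R^{n-2}$ into a single vector.

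Next I would rewrite (\ref{Eq: Markov property}) in matrix form. Since $V_{a\to b}(a) = 1$ and $V_{a\to b}(b) = 0$, the boundary contribution collapses to the single column $P_{xa}$, so the interior equations read $u = P[\{a,b\}^c,\{a,b\}^c]\,u + p_a$, where $p_a := (P_{xa})_{x\in\{a,b\}^c}$. Equivalently, $(I_{n-2} - P[\{a,b\}^c,\{a,b\}^c])\,u = p_a$, and uniqueness of $u$ is reduced to showing that $I_{n-2} - P[\{a,b\}^c,\{a,b\}^c]$ is nonsingular.

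The key algebraic step is to identify this matrix with a Laplacian submatrix. Because $P = \mathcal{D}^{-1}\A$ with $\mathcal{D}$ diagonal, restriction is multiplicative on the principal block, giving $P[\{a,b\}^c,\{a,b\}^c] = \mathcal{D}[\{a,b\}^c,\{a,b\}^c]^{-1}\A[\{a,b\}^c,\{a,b\}^c]$, while linearity of restriction gives $\mathcal{D}[\{a,b\}^c,\{a,b\}^c] - \A[\{a,b\}^c,\{a,b\}^c] = \mathcal{L}[\{a,b\}^c,\{a,b\}^c]$. Hence $I_{n-2} - P[\{a,b\}^c,\{a,b\}^c] = \mathcal{D}[\{a,b\}^c,\{a,b\}^c]^{-1}\mathcal{L}[\{a,b\}^c,\{a,b\}^c]$. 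The point worth care here is that $P$ normalizes by the \emph{full} out-degrees $\mathcal{D}_{xx} = \sum_j \A_{xj}$, so the diagonal that appears is exactly that of the loop-less Laplacian $\mathcal{L}$ restricted to the same index set, and not some locally recomputed degree.

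Finally I would invoke the standing hypotheses to conclude nonsingularity: $\mathcal{D}[\{a,b\}^c,\{a,b\}^c]$ is invertible since $\mathcal{D}_{ii} > 0$ for all $i$, and because $\{a,b\}$ is a reachable subset with $|\{a,b\}| = 2$, Lemma~\ref{lem: Well-definedness of Kron Reduction} guarantees that $\mathcal{L}[\{a,b\}^c,\{a,b\}^c]$ is nonsingular. A product of invertible matrices being invertible, the linear system admits a unique solution $u$; appending the fixed boundary values produces a unique $V_{a\to b}$, which the probabilistic definition necessarily realizes. I do not expect any deep difficulty: the only real obstacle is the bookkeeping in the third step — matching the restricted degree matrix with the diagonal of $\mathcal{L}[\{a,b\}^c,\{a,b\}^c]$ and verifying that the reachability hypothesis of Lemma~\ref{lem: Well-definedness of Kron Reduction} applies to precisely the index set $\{a,b\}^c$.
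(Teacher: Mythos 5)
Your proposal is correct and follows essentially the same route as the paper: fix the boundary values, write the harmonicity relations as a linear system in the interior unknowns, clear the degree normalization to identify the coefficient matrix with $\mathcal{L}[\{a,b\}^c,\{a,b\}^c]$, and conclude nonsingularity from the reachability of $\{a,b\}$. The only cosmetic difference is that the paper multiplies the system by $\mathcal{D}[\{a,b\}^c,\{a,b\}^c]$ rather than factoring out its inverse, which is the same computation.
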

\begin{proof}
    It suffices to prove the lemma in the case when $a = 1, b = 2$ and $n \ge 3$.
    By definition, $V_{1 \to 2}(1) = 1$ and $V_{1 \to 2}(2) = 0$.
     Eq. (\ref{Eq: Markov property}) can be rewritten as
    \begin{align*}
        \begin{bmatrix}1-P_{3 3}&-P_{3 4}&\ldots&-P_{3 n}\\-P_{4 3}&1-P_{4 4}&\ldots&-P_{4 n}\\\vdots&\vdots&\ddots&\vdots\\-P_{n 3}&-P_{n 4}&\ldots&1-P_{n n}\end{bmatrix}
        \begin{bmatrix}V_{1 \to 2}(3)\\V_{1 \to 2}(4)\\\vdots\\V_{1 \to 2}(n)\end{bmatrix}
        &= \begin{bmatrix}P_{3 1}\\P_{4 1}\\\vdots\\P_{n 1}\end{bmatrix}.
    \end{align*}
    By multiplying each side by $\mathcal{D}[\{1,2\}^c, \{1,2\}^c]$, we obtain
    \begin{align}
        \begin{bmatrix}\mathcal{D}_{3 3}-\A_{3 3}&-\A_{3 4}&\ldots&-\A_{3 n}\\-\A_{4 3}&\mathcal{D}_{4 4}-\A_{4 4}&\ldots&-\A_{4 n}\\\vdots&\vdots&\ddots&\vdots\\-\A_{n 3}&-\A_{n 4}&\ldots&\mathcal{D}_{n n}-\A_{n n}\end{bmatrix}
        \begin{bmatrix}V_{1 \to 2}(3)\\V_{1 \to 2}(4)\\\vdots\\V_{1 \to 2}(n)\end{bmatrix}
        &= \begin{bmatrix}\A_{3 1}\\\A_{4 1}\\\vdots\\\A_{n 1}\end{bmatrix}.\label{Eq: Due to strictly loopy case}
    \end{align}
    The matrix of (\ref{Eq: Due to strictly loopy case}) is clearly $\mathcal{L}[\{1,2\}^c, \{1,2\}^c]$, which is nonsingular according to Lemma \ref{lem: Well-definedness of Kron Reduction}.
    Thus, $V_{1 \to 2}[\{1,2\}^c] = \mathcal{L}[\{1,2\}^c, \{1,2\}^c]^{-1}\A[\{1, 2\}^{c}, \{1\}]$ is uniquely determined, thereby completing the proof.
\end{proof}

To define the effective conductance, we replace $V$ and $A$ in (\ref{Eq: EC}) with $V_{a \to b}$ and $\A$:
\begin{align}
    \sum_{x=1}^{n}\A_{a x}(V_{a \to b}(a) - V_{a \to b}(x))
    &= \mathcal{D}_{a a} - \sum_{x=1}^{n}\A_{a x}V_{a \to b}(x)\notag\\
    &= \mathcal{D}_{a a} - \A_{a a}V_{a \to b}(a) - \A_{a b}V_{a \to b}(b) -  \sum_{x \in \{a,b\}^c}\A_{a x}V_{a \to b}(x)\notag\\
    &= (\mathcal{D}_{a a} - \A_{a a})  + \sum_{x \in \{a,b\}^c}\mathcal{L}_{a x}V_{a \to b}(x)\notag\\
    &= \mathcal{L}_{a a} + \mathcal{L}[\{a\}, \{a,b\}^c] V_{a \to b}[\{a, b\}^c],\label{Eq: calculate EC part 1}
\end{align}
where we used the relations $V_{a \to b}(a) = 1, V_{a \to b}(b) = 0$, $\mathcal{L}_{i j} = -\A_{i j}$ and $\mathcal{L}_{i i} = \mathcal{D}_{i i} - \A_{i i}$ for all distinct $i, j \in [n]$.
According to the proof of Lemma \ref{lem: well-definedness of V},
\begin{align*}
    V_{a \to b}[\{a,b\}^c] &= \mathcal{L}[\{a,b\}^c, \{a,b\}^c]^{-1}\A[\{a, b\}^{c}, \{a\}] \\
    &= -\mathcal{L}[\{a,b\}^c, \{a,b\}^c]^{-1}\mathcal{L}[\{a, b\}^{c}, \{a\}].
\end{align*}
Thus, 
\begin{align}
    &\mathcal{L}_{a a} + \mathcal{L}[\{a\}, \{a,b\}^c] V_{a \to b}[\{a,b\}^c]\notag \\
    =& \mathcal{L}_{a a} - \mathcal{L}[\{a\}, \{a,b\}^c](\mathcal{L}[\{a,b\}^c, \{a,b\}^c])^{-1}\mathcal{L}[\{a,b\}^c, \{a\}]\notag \\
    =:& \left(\mathcal{L}/\mathcal{L}[\{a,b\}^c,\{a,b\}^c]\right)_{i_a i_a},\label{Eq: calculate EC part 2}
\end{align}
where
$i_a \in \{1,2\}$ is the index corresponding to $a$ in $\mathcal{L}$.
Because this expression is unaffected by self-loops, including $\A_{a a}$ and $\A_{b b}$,
the effective conductance between two nodes in a directed graph is defined as follows:
\begin{definition}\label{def: EC and ER of directed graphs}
    Let $\G = ([n], \E, \A)$ be a directed graph and $\{a,b\} \subset [n]$ be a reachable subset. 
    Then, the effective conductance and effective resistance from node $a$ to $b$ in $\G$ are defined as
    \begin{align}
        C_{\G}(a, b) &:= \left(\mathcal{L}/\mathcal{L}[\{a,b\}^c,\{a,b\}^c]\right)_{i_a i_a}\label{Eq: Def of EC},\\
        R_{\G}(a, b) &:= \frac{1}{C_{\G}(a, b)},\label{Eq: Def of ER}
    \end{align}
    respectively, where $\mathcal{L}$ is a loop-less Laplacian corresponding to $\G$.
    If $C_{\G}(a,b) = 0$, we define $R_{\G}(a,b) = \infty$.
    By an abuse notation, we denote $i_a$ by $a$.
\end{definition}
Note that the above effective resistance differs from that
introduced in \cite{young2015new}, as described in Section \ref{subsubsec: ER based on Lyapunov theory}.

The following lemma shows the relationship between the effective conductance and Kron reduction.
\begin{lemma}(\textit{Resistive Properties of Kron Reduction})\label{lem: Resistive Properties of KR}
    Let $\G = ([n], \E, \A)$ be a directed graph and $\{a,b\} \subset [n]$ be a reachable subset. 
    The following statements then hold:
    \begin{enumerate}
        \item The Kron-reduced matrix $\mathcal{L}/\mathcal{L}[\{a,b\}^c,\{a,b\}^c]$ takes the form
        \begin{align}
            \mathcal{L}/\mathcal{L}[\{a,b\}^c,\{a,b\}^c] &= \begin{bmatrix}C_{\G}(a, b)&-C_{\G}(a, b)\\-C_{\G}(b, a)&C_{\G}(b, a)\end{bmatrix}. \label{Eq: relationship between ER and KR}
        \end{align}
        \item If $\G$ is loop-less, the effective conductance and the effective resistance are invariant under a Kron reduction.
    \end{enumerate}
\end{lemma}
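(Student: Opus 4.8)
The plan is to treat the two statements separately: the first is essentially a bookkeeping argument built on the closure property, while the second rests on the transitivity of the Schur complement.

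For statement 1, I would write $M := \mathcal{L}/\mathcal{L}[\{a,b\}^c,\{a,b\}^c]$, a $2\times 2$ matrix whose rows and columns are indexed by $\{a,b\}$. The diagonal entries are immediate from Definition \ref{def: EC and ER of directed graphs}: by definition $M_{aa} = C_{\G}(a,b)$, and since $\{b,a\}^c = \{a,b\}^c$, applying the same definition with the roles of $a$ and $b$ interchanged yields $M_{bb} = C_{\G}(b,a)$. To pin down the off-diagonal entries I would invoke the closure property of Lemma \ref{lem: Structure}: because $\{a,b\}$ is a reachable subset and $\mathcal{L}$ is a loop-less Laplacian, $M$ is again a loop-less Laplacian and hence has vanishing row sums, $M\mathbf{1}_2 = \mathbf{0}_2$. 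The two scalar relations $M_{aa}+M_{ab}=0$ and $M_{ba}+M_{bb}=0$ then force $M_{ab} = -C_{\G}(a,b)$ and $M_{ba} = -C_{\G}(b,a)$, which is exactly the matrix displayed in (\ref{Eq: relationship between ER and KR}).

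For statement 2, fix a reachable subset $\alpha$ with $\{a,b\}\subset\alpha$ and set $\mathcal{L}_{\mathrm{red}} = \mathcal{L}/\mathcal{L}[\alpha^c,\alpha^c]$. The crucial point is that, because $\G$ is loop-less, the closure property of Lemma \ref{lem: Structure} guarantees that $\mathcal{L}_{\mathrm{red}}$ is again loop-less, so it genuinely is the loop-less Laplacian attached to $\G_{\mathrm{red}}$ and may be fed directly into Definition \ref{def: EC and ER of directed graphs}, giving $C_{\G_{\mathrm{red}}}(a,b) = (\mathcal{L}_{\mathrm{red}}/\mathcal{L}_{\mathrm{red}}[\alpha\setminus\{a,b\},\alpha\setminus\{a,b\}])_{aa}$. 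I would then apply the Quotient Formula for Schur complements \cite{zhang2006schur}: eliminating first $\alpha^c$ and then $\alpha\setminus\{a,b\}$ is the same as eliminating $\{a,b\}^c = \alpha^c \cup (\alpha\setminus\{a,b\})$ in one step, so
\begin{align*}
    \mathcal{L}_{\mathrm{red}}/\mathcal{L}_{\mathrm{red}}[\alpha\setminus\{a,b\},\alpha\setminus\{a,b\}] &= \mathcal{L}/\mathcal{L}[\{a,b\}^c,\{a,b\}^c].
\end{align*}
Reading off the $(a,a)$ entry gives $C_{\G_{\mathrm{red}}}(a,b) = C_{\G}(a,b)$, and taking reciprocals via (\ref{Eq: Def of ER}) transfers invariance to the effective resistance.

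The main work is twofold. First, one must verify the index bookkeeping behind the Quotient Formula: all intermediate Schur complements are well-defined (since $\alpha$, being a superset of the reachable set $\{a,b\}$, is itself reachable, so Lemma \ref{lem: Well-definedness of Kron Reduction} applies), and the partition $\{a,b\}^c = \alpha^c \sqcup (\alpha\setminus\{a,b\})$ must be tracked by original labels rather than positions. The conceptually more delicate point, and what I expect to be the real obstacle, is recognizing precisely why the loop-less hypothesis cannot be dropped: Theorem \ref{thm: Algebraic properties} shows that for a loopy graph the reduced Laplacian carries the extra nonnegative term $\mathcal{S}$ generated by the interior self-loops, so the loop-less Laplacian of $\G_{\mathrm{red}}$ is no longer $\mathcal{L}/\mathcal{L}[\alpha^c,\alpha^c]$. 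This breaks the clean transitivity exploited above, and the argument genuinely requires $\A_{ii}=0$ on the eliminated nodes to identify $\mathcal{L}_{\mathrm{red}}$ with the Schur complement of the loop-less Laplacian.
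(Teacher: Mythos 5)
Your proposal is correct and follows essentially the same route as the paper: statement 1 is obtained by combining the definition of $C_{\G}(a,b)$ with the closure property of Lemma \ref{lem: Structure} (zero row sums of the reduced loop-less Laplacian), and statement 2 by the Quotient Formula for Schur complements applied to $\mathcal{L}_{\mathrm{red}}=\mathcal{L}/\mathcal{L}[\alpha^c,\alpha^c]$. Your added remark on why the loop-less hypothesis is essential (the extra term $\mathcal{S}$ from interior self-loops breaking the identification of the reduced loop-less Laplacian with the Schur complement) is a correct observation that the paper leaves implicit.
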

\begin{proof}
    Since $\{a, b\} \subset [n]$ is a reachable subset, Lemma \ref{lem: Structure} implies that $\mathcal{L}/\mathcal{L}[\{a,b\}^c,\{a,b\}^c]$ is a loop-less Laplacian.
    Thus,
    \begin{align*}
        \begin{bmatrix}\left(\mathcal{L}/\mathcal{L}[\{a,b\}^c,\{a,b\}^c]\right)_{a a}&\left(\mathcal{L}/\mathcal{L}[\{a,b\}^c,\{a,b\}^c]\right)_{a b}\\\left(\mathcal{L}/\mathcal{L}[\{a,b\}^c,\{a,b\}^c]\right)_{b a}&\left(\mathcal{L}/\mathcal{L}[\{a,b\}^c,\{a,b\}^c]\right)_{b b}\end{bmatrix}
        \begin{bmatrix}1\\1\end{bmatrix} &=
        \begin{bmatrix}
            0\\0
        \end{bmatrix}.
    \end{align*}
     We therefore obtain
    \begin{align*}
        \begin{cases}
            \left(\mathcal{L}/\mathcal{L}[\{a,b\}^c,\{a,b\}^c]\right)_{a b} &= -\left(\mathcal{L}/\mathcal{L}[\{a,b\}^c,\{a,b\}^c]\right)_{a a}= -C_{\G}(a, b), \\
            \left(\mathcal{L}/\mathcal{L}[\{a,b\}^c,\{a,b\}^c]\right)_{b a} &= -\left(\mathcal{L}/\mathcal{L}[\{a,b\}^c,\{a,b\}^c]\right)_{b b}\,= -C_{\G}(b, a),
        \end{cases}
    \end{align*}
    which proves the identity (\ref{Eq: relationship between ER and KR}).
    
    To prove statement $2)$, fix the subset $\alpha \subseteq [n]$ containing $\{a, b\} \subset \alpha$ and denote $\mathcal{L}_{\mathrm{red}} = \mathcal{L}/\mathcal{L}[\alpha^c\textcolor{black}{,} \alpha^c]$.
    According to the quotient formula \cite{zhang2006schur}, $\mathcal{L}_{\mathrm{red}}[\{a, b\}^c, \{a, b\}^c]$ is a nonsingular principal submatrix of $\mathcal{L}_{\mathrm{red}}$, and
    \begin{align*}
        \begin{bmatrix}C_{\G}(a, b)&-C_{\G}(a, b)\\-C_{\G}(b, a)&C_{\G}(b, a)\end{bmatrix} &= \mathcal{L}/\mathcal{L}[\{a,b\}^c,\{a,b\}^c]\\
        &=  \mathcal{L}_{\mathrm{red}}/\mathcal{L}_{\mathrm{red}}[\{a,b\}^c, \{a,b\}^c] \\
        &= \begin{bmatrix}C_{\G_{\mathrm{red}}}(a, b)&-C_{\G_{\mathrm{red}}}(a, b)\\-C_{\G_{\mathrm{red}}}(b, a)&C_{\G_{\mathrm{red}}}(b, a)\end{bmatrix},
    \end{align*}
    for which we used statement $1)$.
    Thus, statement $2)$ follows.
\end{proof}
\begin{remark}[Strictly loopy case]
    If we consider the self-loops $\A_{i i}$ as \textit{shunt conductances} connecting node $i$ to the ground, the role can be better understood by introducing the
    additional \textit{grounded node} with index $n + 1$. In other words, we consider the augmented graph $\hat{\G}$ with the node set $[n+1]$, and replace the self-loop $\A_{i i}$ in $\G$ with a bidirectional edge between $i$ and $n+1$, whose weight is equal to $\A_{i i}$ in $\hat{\G}$.
    Because $\hat{\G}$ has no self-loops, the effective conductance of this interpretation can be defined by the corresponding augmented Laplacian matrix 
    \begin{align*}
        \hat{\Q} &:= \left[\begin{array}{c|c}\Q&-\mathrm{diag}(\{\A_{i i}\}_{i=1}^{n})\mathbf{1}_n\\\hline -\mathbf{1}_n^{\top} \mathrm{diag}(\{\A_{i i}\}_{i=1}^n)&\sum_{i=1}^n \A_{i i}\end{array}\right] \in \R^{(n+1) \times (n+1)}
    \end{align*}
    instead of the loop-less Laplacian $\mathcal{L} \in \R^{n \times n}$.
\end{remark}
\begin{figure}[hbtp]
    \centering
    \includegraphics[keepaspectratio, scale=0.35]{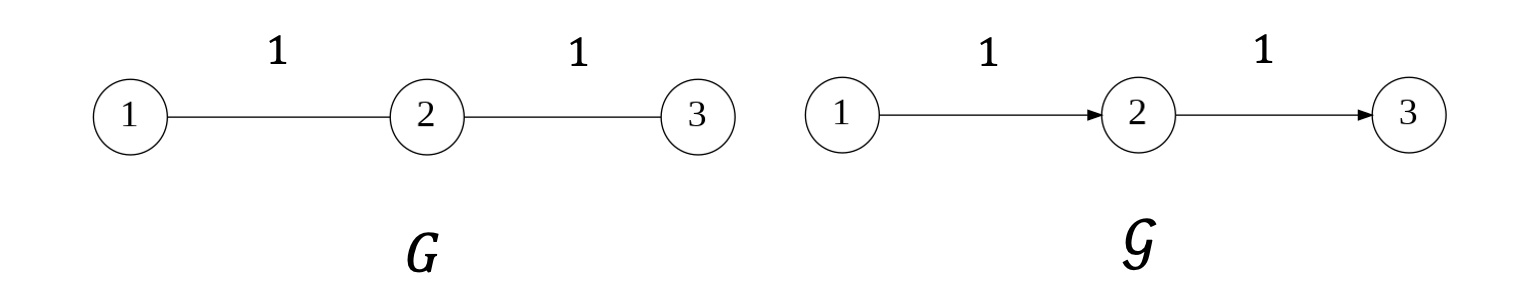}
    \caption{A simple 3-node directed graph $\G$ and the corresponding undirected graph $G$}.
    \label{fig: not_length}
\end{figure}
\begin{remark}
    It can be proven that the effective resistance is equivalent to the length of the shortest path when the graph is undirected, unweighted and acyclic.
    Unfortunately, this property does not hold in our generalization. Consider 3-node graphs, as shown in Fig. \ref{fig: not_length}.
    For undirected graph $G$, we can compute that $R_{G}(1, 3) = R_{G}(3, 1) = 2$. However, for directed graph $\G$, $R_{\G}(1, 3) = 1 \neq 2$ and $R_{\G}(3, 1) = \infty \neq 2$.
\end{remark}

We can interpret the effective conductance probabilistically, as shown in 
the following theorem, which is a generalization of the undirected case in \cite{doyle1984random} applied to a directed graph.
\begin{theorem}\label{lem: probabilistic interpretation}
        Let $\G = ([n], \E, \A)$ be a directed graph and $\{a, b\} \subset [n]$ be a reachable subset.
        Suppose $\mathcal{D}_{i i} > 0$ for all $i \in [n]$, and consider a Markov chain $X$ with the probability transition matrix $P = \mathcal{D}^{-1}\A$ on $[n]$.
        The following statements hold.
        \begin{enumerate}
            \item Let $P_{\mathrm{esc}}(a, b)$ be the probability, starting at $a$, that the walk reaches $b$ before returning to $a$. Then,
            \begin{align*}
                C_{\G}(a, b) = \mathcal{D}_{a a}P_{\mathrm{esc}}(a, b).
            \end{align*}
            \item If there is an edge from $a$ to $b$, the probability that the walk starting at $a$ reaches $b$ for the first time using the edge $\{a, b\} \in \E$ is equal to $\A_{a b}/C_{\G}(a, b)$.
        \end{enumerate}
    \end{theorem}
The probability $P_{\mathrm{esc}}(a, b)$ is said to be the escape probability from $a$ to $b$.
\begin{proof}
    It follows from (\ref{Eq: calculate EC part 1}),(\ref{Eq: calculate EC part 2}), and (\ref{Eq: Def of EC}) that
    \begin{align*}
        C_{\G}(a, b) = \left(\mathcal{L}/\mathcal{L}[\{a,b\}^c,\{a,b\}^c]\right)_{a a}
        = \mathcal{D}_{a a}\left(1 - \sum_{x=1}^n P_{a x}V_{a \to b}(x)\right).
    \end{align*}
    Recall that $V_{a \to b}(x)$ is the probability that the walk starting at $x$ reaches $a$ before $b$.
    Then, $\sum_{x=1}^n P_{a x}V_{a \to b}(x)$ is the probability that the walk starts at $a$ and returns to $a$ before reaching $b$, which is equal to $1 - P_{\mathrm{esc}}(a, b)$, thus proving statement $1)$.
    
    According to Theorem \ref{thm: Algebraic properties} and Lemma \ref{lem: Resistive Properties of KR},
    \begin{align*}
        C_{\G}(a, b) &= -\left(\mathcal{L}/\mathcal{L}[\{a,b\}^c,\{a,b\}^c]\right)_{a b} \geq -\mathcal{L}_{a b} = \A_{a b}.
    \end{align*}
    This inequality clearly shows that if there is an edge from $a$ to $b$, then $0 \leq \A_{a b}/ C_{\G}(a, b) \leq 1$.
    Let $p$ be the probability that a walk starting at $a$ reaches $b$ for the first time using edge $\{a,b\}$. 
    The probability that the walk starts with edge $\{a, b\}$ is $P_{a b}$.
    It also has a probability $1 - P_{\mathrm{esc}}(a, b)$ of not visiting $b$ before returning to $a$.
    In this case, the probability of visiting $b$ using edge $\{a,b\}$ is $p$. Therefore,
        $p = P_{a b} + (1 - P_{\mathrm{esc}}(a, b))p$,
    and thus
    \begin{align*}
        p = \frac{P_{a b}}{P_{\mathrm{esc}}(a, b)}
        = \frac{\A_{a b}}{C_{\G}(a, b)},
    \end{align*}
    which completes the proof of statement $2)$.
\end{proof}

\begin{remark}
    Without using the concept of the reachable subset, a generalization of Definition \ref{def: EC and ER of directed graphs}  is difficult to achieve for the following reasons.
    \begin{itemize}
        \item Lemma \ref{lem: Well-definedness of Kron Reduction} states that
    if the reachability assumption of $\{a,b\} \subset [n]$ is removed,
    $\mathcal{L}[\{a,b\}^c,\{a,b\}^c]$ is singular and the equations (\ref{Eq: Def of EC}) and (\ref{Eq: Def of ER}) are ill-defined.
    In other words, there is a probability that a walk starting from $a$ will not reach either $a$ or $b$ under a general Markov chain. In this case, the escape probability $P_{\mathrm{esc}}(a,b)$ cannot be defined.
    \item The conditions for when the effective resistance is defined between a pair of nodes is generally non-transitive.
    That is, even if $R_{\mathcal{G}}(a,b)$ and $R_{\mathcal{G}}(b,c)$ can be defined,
    $R_{\mathcal{G}}(a,c)$ generally cannot.
    In fact, consider again a directed graph, $\mathcal{G}$ in Fig. \ref{fig: not_transitive}. 
    Then, because $\{1,2\}$ and $\{2,3\}$ are reachable subsets, $R_{\mathcal{G}}(1,2)$ and $R_{\mathcal{G}}(2,3)$ can be defined.
    By contrast, $R_{\mathcal{G}}(1,3)$ cannot be defined, because $\{1,3\}$ is not reachable.
    \end{itemize}
\end{remark}

\subsection{Effective resistance of strongly connected graphs}
\label{subsec: ER of SC graphs}
In this subsection, we focus on strongly connected graphs.
In this case, any pair of distinct nodes constitute a reachable set. Moreover, for any pair of distinct nodes, the effective resistance is finite, as shown in the following.
\subsubsection{Weight balanced directed graphs}
\label{subsubsec: ER of Weight balanced directed graphs}
First, we consider the weight balanced directed graphs introduced in Subsection \ref{subsec: Weight balanced directed graph}.
The following theorem shows that the effective resistance of strongly connected weight balanced directed graphs can be calculated using a formula similar to that applied to undirected cases.
\begin{theorem}[\textit{Effective Resistance of Weight Balanced Directed Graph}]\label{thm: ER of Weight Balanced Directed Graph}
        Let $\G = ([n], \E, \A)$ be a strongly connected weight balanced directed graph and let $\mathcal{L}$ be the corresponding loop-less Laplacian matrix.
        Subsequently, the effective resistance from node $a$ to node $b$ in $\G$ is
        \begin{align}
            R_{\G}(a, b) &= (e_a - e_b)^{\top}\mathcal{L}^{\dagger}(e_a - e_b)\label{Eq: expression of EL of Balanced}\\
            &= (e_a - e_b)^{\top}\mathcal{L}^{\dagger}_s(e_a - e_b), \label{Eq: expression of EL of Balanced via sym}
        \end{align}
        where $\mathcal{L}^{\dagger}_s = (\mathcal{L}^{\dagger} + (\mathcal{L}^{\dagger})^{\top})/2$.
\end{theorem}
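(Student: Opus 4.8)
The plan is to sidestep any Schur-complement identity for the pseudoinverse and instead exploit the harmonic vector $V_{a \to b}$ that already appears in the definition of $C_{\G}(a,b)$. Write $u = e_a - e_b$. Since $u^{\top}\mathbf{1}_n = 0$, the vector $u$ lies in $\mathbf{1}_n^{\perp}$; because we are in the strongly connected setting of this subsection, $\ker\mathcal{L}$ is one-dimensional and spanned by $\mathbf{1}_n$, so $u$ lies in the range of $\mathcal{L}$ and $\mathcal{L}^{\dagger}u$ is its unique preimage inside $\mathbf{1}_n^{\perp}$. Both displayed identities then reduce to evaluating the single scalar $u^{\top}\mathcal{L}^{\dagger}u$, since this scalar equals its own transpose $u^{\top}(\mathcal{L}^{\dagger})^{\top}u$ and hence equals $u^{\top}\mathcal{L}^{\dagger}_s u$ after averaging; so it suffices to prove (\ref{Eq: expression of EL of Balanced}).

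The decisive step, and the one place where weight balancedness is used, is the global identity
\begin{align*}
\mathcal{L}V_{a \to b} &= C_{\G}(a,b)(e_a - e_b).
\end{align*}
I would verify it componentwise. At every interior node $x \in \{a,b\}^c$ the Markov-property relation $(\ref{Eq: Markov property})$ gives $(\mathcal{L}V_{a \to b})_x = \mathcal{D}_{xx}((I_n - P)V_{a \to b})_x = 0$. At node $a$, combining $(\ref{Eq: calculate EC part 1})$, $(\ref{Eq: calculate EC part 2})$ and $(\ref{Eq: Def of EC})$ shows $(\mathcal{L}V_{a \to b})_a = \sum_{x=1}^n \A_{ax}(V_{a \to b}(a) - V_{a \to b}(x)) = C_{\G}(a,b)$. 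Finally, weight balancedness means $\mathbf{1}_n^{\top}\mathcal{L} = \mathbf{0}_n^{\top}$, so summing all entries of $\mathcal{L}V_{a \to b}$ forces $(\mathcal{L}V_{a \to b})_a + (\mathcal{L}V_{a \to b})_b = 0$ and hence $(\mathcal{L}V_{a \to b})_b = -C_{\G}(a,b)$. It is exactly this use of zero column sums that determines the $b$-component and is unavailable for a general directed graph.

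Given the identity, the computation closes in a few lines. Because $\ker\mathcal{L}$ and $\ker\mathcal{L}^{\top}$ both coincide with the span of $\mathbf{1}_n$, the product $\mathcal{L}^{\dagger}\mathcal{L}$ is the orthogonal projection onto $\mathbf{1}_n^{\perp}$, which is precisely $\Pi_n$. Hence
\begin{align*}
u^{\top}\mathcal{L}^{\dagger}u &= \frac{1}{C_{\G}(a,b)}\, u^{\top}\mathcal{L}^{\dagger}\mathcal{L}V_{a \to b} = \frac{1}{C_{\G}(a,b)}\, u^{\top}\Pi_n V_{a \to b} = \frac{1}{C_{\G}(a,b)}\, u^{\top}V_{a \to b},
\end{align*}
where the last equality uses $u^{\top}\mathbf{1}_n = 0$, and $u^{\top}V_{a \to b} = V_{a \to b}(a) - V_{a \to b}(b) = 1$. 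Thus $u^{\top}\mathcal{L}^{\dagger}u = 1/C_{\G}(a,b) = R_{\G}(a,b)$, establishing (\ref{Eq: expression of EL of Balanced}) and, by the averaging remark above, (\ref{Eq: expression of EL of Balanced via sym}).

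The main obstacle is simply recognizing the identity $\mathcal{L}V_{a \to b} = C_{\G}(a,b)(e_a - e_b)$ and confirming the two supporting facts, namely that $u$ lies in the range of $\mathcal{L}$ and that $\mathcal{L}^{\dagger}\mathcal{L} = \Pi_n$, both of which rest on the kernels of $\mathcal{L}$ and $\mathcal{L}^{\top}$ agreeing with the span of $\mathbf{1}_n$, i.e.\ on weight balancedness together with strong connectivity.
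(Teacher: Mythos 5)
Your argument is correct, and it takes a genuinely different route from the paper's. The paper proves (\ref{Eq: expression of EL of Balanced}) by first invoking an explicit formula for $\mathcal{L}^{\dagger}$ of a strongly connected weight balanced graph (Lemma 4 of \cite{fontan2021properties}), then establishing the invariance $(e_a-e_b)^{\top}\mathcal{L}^{\dagger}(e_a-e_b) = (e_a-e_b)^{\top}\bigl(\mathcal{L}/\mathcal{L}[\{a,b\}^c,\{a,b\}^c]\bigr)^{\dagger}(e_a-e_b)$ by analogy with Theorem III.8 of \cite{dorfler2012kron}, and finally computing the pseudoinverse of the $2\times 2$ reduced Laplacian, which Theorem \ref{thm: preservation of balanced} guarantees has the form $C_{\G}(a,b)\bigl[\begin{smallmatrix}1&-1\\-1&1\end{smallmatrix}\bigr]$. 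You instead work directly with the harmonic vector $V_{a\to b}$ and the identity $\mathcal{L}V_{a\to b} = C_{\G}(a,b)(e_a-e_b)$, whose $a$-component is exactly (\ref{Eq: calculate EC part 1})--(\ref{Eq: calculate EC part 2}), whose interior components vanish by (\ref{Eq: Markov property}), and whose $b$-component is pinned down by the zero column sums coming from weight balancedness; contracting with $\mathcal{L}^{\dagger}$ and using $\mathcal{L}^{\dagger}\mathcal{L}=\Pi_n$ (valid since strong connectivity gives $\ker\mathcal{L}=\mathrm{span}(\mathbf{1}_n)$, as is implicit throughout this subsection) then yields $u^{\top}\mathcal{L}^{\dagger}u = (V_{a\to b}(a)-V_{a\to b}(b))/C_{\G}(a,b) = R_{\G}(a,b)$. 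Your route is more self-contained: it avoids both the external pseudoinverse formula and the Schur-complement/pseudoinverse commutation step that the paper only sketches, and it makes transparent exactly where weight balancedness enters (determining the $b$-component). What it does not deliver, and what the paper's route does, is the reusable identity between $\mathcal{L}^{\dagger}$ and the pseudoinverse of the Kron-reduced Laplacian, which connects the theorem back to the reduction machinery of Section \ref{sec: Kron reduction of directed graphs}. The passage from (\ref{Eq: expression of EL of Balanced}) to (\ref{Eq: expression of EL of Balanced via sym}) by symmetrizing the quadratic form is the same in both arguments.
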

\begin{proof}
    According to Lemma 4 in \cite{fontan2021properties}, 
        $\mathcal{L}^{\dagger} = \left(\mathcal{L} + \frac{\gamma}{n}\mathbf{1}_{n \times n}\right)^{-1} - \frac{1}{n\gamma}\mathbf{1}_{n \times n}$
    for all $\gamma \neq 0$ if $\G$ is strongly connected weight balanced.
    Then, the right-hand side of (\ref{Eq: expression of EL of Balanced}) can be reformulated as
    \begin{align}
        (e_a - e_b)^{\top}\mathcal{L}^{\dagger}(e_a - e_b) &= (e_{i_a} - e_{i_b})^{\top}\left(\mathcal{L}/\mathcal{L}[\{a, b\}^c, \{a,b\}^c]\right)^{\dagger}(e_{i_a} - e_{i_b})\label{Eq: analogy},
    \end{align}
    which is analogous to the proof of Theorem I\hspace{-.1em}I\hspace{-.1em}I.8 in \cite{dorfler2012kron}, where $i_a, i_b \in \{1,2\}$ is the index corresponding to $a,b$ in $\mathcal{L}$ respectively.
    According to Theorem \ref{thm: preservation of balanced}, $\mathcal{L}/\mathcal{L}[\{a, b\}^c, \{a,b\}^c] \in \R^{2 \times 2}$ is weight balanced, and thus,
    \begin{align*}
        0 &= \left(\mathcal{L}/\mathcal{L}[\{a, b\}^c, \{a,b\}^c]\right)_{i_a, i_a} + \left(\mathcal{L}/\mathcal{L}[\{a, b\}^c, \{a,b\}^c]\right)_{i_a, i_b}\\
        & = \left(\mathcal{L}/\mathcal{L}[\{a, b\}^c, \{a,b\}^c]\right)_{i_a, i_a} + \left(\mathcal{L}/\mathcal{L}[\{a, b\}^c, \{a,b\}^c]\right)_{i_b\textcolor{black}{,} i_a}.
    \end{align*}
    Therefore,
    \begin{align*}
        \mathcal{L}/\mathcal{L}[\{a, b\}^c, \{a,b\}^c] &= \left(\mathcal{L}/\mathcal{L}[\{a, b\}^c, \{a,b\}^c]\right)_{i_a, i_a} \begin{bmatrix}1&-1\\-1&1\end{bmatrix} = \frac{1}{R_{\G}(a, b)}\begin{bmatrix}1&-1\\-1&1\end{bmatrix}.
    \end{align*}
    A direct computation shows that
        $\begin{bmatrix}x&-x\\-x&x\end{bmatrix}^{\dagger} = \frac{1}{4x}\begin{bmatrix}1&-1\\-1&1\end{bmatrix}$
    for all $x \neq 0$.
    By applying this matrix identity with $x =  1 / R_{\G}(a, b)$, (\ref{Eq: analogy}) is rendered as
    \begin{align*}
        (e_a - e_b)^{\top}\mathcal{L}^{\dagger}(e_a - e_b) 
        = \frac{R_{\G}(a, b)}{4}(e_{i_a} - e_{i_b})^{\top}\begin{bmatrix}1&-1\\-1&1\end{bmatrix}(e_{i_a} - e_{i_b})
        = R_{\G}(a, b),
    \end{align*}
    which proves the identity (\ref{Eq: expression of EL of Balanced}).\\
    \indent Equation (\ref{Eq: expression of EL of Balanced via sym}) follows from (\ref{Eq: expression of EL of Balanced}) and the identity $v^{\top}A v = v^{\top}((A + A^{\top})/2)v$ for $A \in \R^{n \times n}$ and $v \in \R^n$.
\end{proof}
\begin{remark}
    In \cite{fontan2021properties}, the authors defined the effective resistance of a signed digraph whose corresponding Laplacian $L$ is normal, i.e., $LL^{\top} = L^{\top}L$
and $-L$ is eventually exponentially positive, that is, there exists $t_0 \in \N$ such that $\mathrm{exp}(-Lt)$ is a positive matrix for all $t \ge t_0$. Moreover, they also mentioned
the extension to non-negative and strongly connected weight balanced directed graphs, which is given by (\ref{Eq: expression of EL of Balanced via sym}).
\end{remark}

The following theorem, which is a generalization of Theorem B in \cite{klein1993resistance}, shows that the effective resistance defines a metric on a weight balanced directed graph.
\begin{theorem}\label{thm: the metricity}
    The effective resistance from node $a$ to node $b$ of a strongly connected weight balanced directed graph with a loop-less Laplacian $\mathcal{L}$ is a metric.
    In other words, the following statements hold.
    \begin{enumerate}
        \item \textbf{Non-negativity:} $R_{\G}(a,b) \ge 0$. The equality holds if and only if $a = b$.
        \item \textbf{Symmetry:} $R_{\G}(a,b) = R_{\G}(b,a)$.\\
        \item \textbf{Triangle inequality:} $R_{\G}(a, b) \leq R_{\G}(a, c) + R_{\G}(c, b)$.
    \end{enumerate}
    Moreover, the effective resistance matrix $R_{\G} := (R_{\G}(a, b))_{a, b \in [n]} \in \R^{n \times n}$ is a \textit{Euclidean distance matrix},
     that is, there exists a set of $n$ vectors $x_1, \ldots, x_n \in \R^n$ such that 
     \begin{align*}
        (R_{\mathcal{G}})_{a b} &= \|x_a - x_b \|_2^2 = \sum_{i = 1}^n (x_{a i} - x_{b i})^2.
     \end{align*}
\end{theorem}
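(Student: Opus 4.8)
The plan is to reduce every claim to a single spectral fact about the symmetric matrix $\mathcal{L}^{\dagger}_s = (\mathcal{L}^{\dagger} + (\mathcal{L}^{\dagger})^{\top})/2$ appearing in Theorem \ref{thm: ER of Weight Balanced Directed Graph}, namely that $\mathcal{L}^{\dagger}_s$ is positive semidefinite with kernel exactly $\mathrm{span}(\mathbf{1}_n)$. Granting this, the representation $R_{\G}(a, b) = (e_a - e_b)^{\top}\mathcal{L}^{\dagger}_s(e_a - e_b)$ from (\ref{Eq: expression of EL of Balanced via sym}) makes all four assertions essentially automatic, so the real work is concentrated in establishing that one fact.

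First I would record the consequences of weight balancedness and strong connectivity for the null structure of $\mathcal{L}$. Since $\mathcal{L}\mathbf{1}_n = \mathbf{0}_n$ and, by weight balancedness, $\mathbf{1}_n^{\top}\mathcal{L} = \mathbf{0}_n^{\top}$, strong connectivity forces both the kernel and the cokernel of $\mathcal{L}$ to equal $\mathrm{span}(\mathbf{1}_n)$. The Moore--Penrose identities $\ker \mathcal{L}^{\dagger} = \ker \mathcal{L}^{\top}$ and $\mathrm{range}\,\mathcal{L}^{\dagger} = \mathrm{range}\,\mathcal{L}^{\top}$ then give $\mathcal{L}^{\dagger}\mathbf{1}_n = \mathbf{0}_n$ and $\mathbf{1}_n^{\top}\mathcal{L}^{\dagger} = \mathbf{0}_n^{\top}$, so that $\mathbf{1}_n \in \ker \mathcal{L}^{\dagger}_s$ and $\mathrm{range}\,\mathcal{L}^{\dagger} = \mathbf{1}_n^{\perp}$.

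Next comes the core step. The symmetric part $\mathcal{L}_s = (\mathcal{L} + \mathcal{L}^{\top})/2$ has nonpositive off-diagonal entries $-(\A_{ij} + \A_{ji})/2$ and zero row sums (again by weight balancedness), so it is the loop-less Laplacian of the undirected graph with edge weights $(\A_{ij} + \A_{ji})/2$; this graph is connected because $\G$ is strongly connected, whence $\mathcal{L}_s$ is positive semidefinite with $\ker \mathcal{L}_s = \mathrm{span}(\mathbf{1}_n)$. I would then transfer positivity from $\mathcal{L}_s$ to $\mathcal{L}^{\dagger}_s$ as follows: for $x \perp \mathbf{1}_n$ write $x = \mathcal{L}y$ with $y \perp \mathbf{1}_n$ (possible since $\mathrm{range}\,\mathcal{L} = \mathbf{1}_n^{\perp}$), so that $\mathcal{L}^{\dagger}x = \mathcal{L}^{\dagger}\mathcal{L}y = y$ because $\mathcal{L}^{\dagger}\mathcal{L}$ is the orthogonal projection onto $\mathbf{1}_n^{\perp}$. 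Hence $x^{\top}\mathcal{L}^{\dagger}_s x = x^{\top}\mathcal{L}^{\dagger}x = y^{\top}\mathcal{L}^{\top}y = y^{\top}\mathcal{L}_s y \ge 0$, with equality only when $y \in \mathrm{span}(\mathbf{1}_n) \cap \mathbf{1}_n^{\perp} = \{\mathbf{0}_n\}$, i.e. $x = \mathbf{0}_n$. This shows that $\mathcal{L}^{\dagger}_s$ is positive definite on $\mathbf{1}_n^{\perp}$ and vanishes on $\mathrm{span}(\mathbf{1}_n)$, as claimed. I expect this transfer to be the main obstacle, precisely because the symmetric part of the pseudoinverse is not the pseudoinverse of the symmetric part, so one cannot argue about $\mathcal{L}_s^{\dagger}$ directly and must instead exploit the two-sided null-vector structure forced by weight balancedness.

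Finally I would harvest the conclusions. Factoring the positive semidefinite matrix as $\mathcal{L}^{\dagger}_s = B^{\top}B$ with $B \in \R^{n \times n}$ and setting $x_a := B e_a$ yields $R_{\G}(a, b) = (e_a - e_b)^{\top}B^{\top}B(e_a - e_b) = \|x_a - x_b\|_2^2$, which is exactly the Euclidean distance matrix property with vectors in $\R^n$. Symmetry of $\sqrt{R_{\G}}$ is then immediate from $\|x_a - x_b\|_2 = \|x_b - x_a\|_2$ (equivalently from the symmetry of $\mathcal{L}^{\dagger}_s$); non-negativity with equality iff $a = b$ follows from the positive definiteness of $\mathcal{L}^{\dagger}_s$ on $\mathbf{1}_n^{\perp}$, since $e_a - e_b \in \mathbf{1}_n^{\perp}$ is nonzero exactly when $a \neq b$; and the triangle inequality is inherited verbatim from the Euclidean norm, as $\sqrt{R_{\G}(a, b)} = \|x_a - x_b\|_2$. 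This simultaneously settles the three metric axioms and the Euclidean distance matrix assertion.
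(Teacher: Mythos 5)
Your proof is correct. The paper itself does not write out a proof of this theorem --- it states only that the result ``follows analogously from the proof of Lemma 5 of \cite{fontan2021properties}'' --- so your argument supplies exactly the details the paper delegates to that reference, and it does so along essentially the route that reference takes: reduce everything to the claim that $\mathcal{L}^{\dagger}_s$ is positive semidefinite with kernel $\mathrm{span}(\mathbf{1}_n)$, then obtain a Gram factorization $\mathcal{L}^{\dagger}_s = B^{\top}B$ which delivers the Euclidean embedding $x_a = Be_a$ and hence all three metric axioms simultaneously. Your transfer step --- writing $x = \mathcal{L}y$ with $y \perp \mathbf{1}_n$ so that $\mathcal{L}^{\dagger}x = y$ and $x^{\top}\mathcal{L}^{\dagger}_s x = x^{\top}\mathcal{L}^{\dagger}x = y^{\top}\mathcal{L}_s y$ --- is the right way to exploit the two-sided null structure that weight balancedness provides, and you are correct that one cannot shortcut through $(\mathcal{L}_s)^{\dagger}$, a pitfall the paper itself points out in a remark. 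Two minor points: the theorem statement omits strong connectivity, but it is assumed throughout the subsection and your proof genuinely needs it in two places (for $\dim\ker\mathcal{L}=1$ and for connectivity of the symmetrized undirected graph, so that $y^{\top}\mathcal{L}_s y = 0$ forces $y \in \mathrm{span}(\mathbf{1}_n)$); and the standard fact that $\ker\mathcal{L} = \ker\mathcal{L}^{\top} = \mathrm{span}(\mathbf{1}_n)$ for a strongly connected weight-balanced digraph would benefit from a one-line justification or citation, though it does not affect the validity of the argument.
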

\begin{proof}
Statements
    $1), 2)$, and the fact that $R_{\G}$ is a Euclidean distance matrix are analogous to the proof of Lemma 5 in \cite{fontan2021properties}.
    \textcolor{black}{Here,
    we have used the fact that $\mathcal{L}^{\dagger}_s$ is a positive semidefinite of rank $n-1$, as shown in
    Theorem 4 in \cite{fontan2021properties}.}
    To prove the triangle inequality, we replace $\mathcal{L}$ with $\mathcal{L}/[\{a,b,c\}^c, \{a,b,c\}^c]$ based on a similar discussion as for (\ref{Eq: analogy}). Thus, it suffices to prove the theorem in the case when $a=1,b=2,c=3, n = 3$, and
    \begin{align*}
        \mathcal{L} &= \begin{bmatrix}
        \A_{12} + \A_{13} & -\A_{12} & -\A_{13}\\
        -\A_{21} & \A_{21} + \A_{23} & -\A_{23}\\
        -\A_{31} & -\A_{32} & \A_{31} + \A_{32}
        \end{bmatrix}.
    \end{align*}
    According to Definition \ref{def: EC and ER of directed graphs},
    \begin{align}
        R_{\G}(1,2) &= \frac{1}{(\mathcal{L}/\mathcal{L}[\{3\}, \{3\}])}_{11} = \frac{\A_{31} + \A_{32}}{(\A_{12} + \A_{13})(\A_{31}+\A_{32}) - \A_{13}\A_{31}}. \label{Eq: 3 times 3}
    \end{align}
    Since $\G$ is weight balanced, $\sum_{i \neq j}\A_{ij} = \sum_{i \neq j}\A_{j i}$ for all $j \in \{1,2,3\}$.
    Thus, the denominator on the right side of (\ref{Eq: 3 times 3})
    can be calculated as
    \begin{align*}
        (\A_{12} + \A_{13})(\A_{31}+\A_{32}) - \A_{13}\A_{31}
        &= (\A_{23} + \A_{21})(\A_{12}+\A_{13}) - \A_{21}\A_{12}\\
        &= (\A_{31} + \A_{32})(\A_{23}+\A_{21}) - \A_{32}\A_{23}.
    \end{align*}
    Therefore,
    \begin{align*}
        R_{\G}(1,2) + R_{\G}(2,3) - R_{\G}(1,3)
        &= \frac{(\A_{31} + \A_{32}) + (\A_{12} + \A_{13}) - (\A_{23} + \A_{21})}{(\A_{12} + \A_{13})(\A_{31}+\A_{32}) - \A_{13}\A_{31}}\\
        &= \frac{\A_{13} + \A_{31}}{\A_{12}\A_{31}+\A_{13}\A_{32} + \A_{12}\A_{32}} \ge 0.
    \end{align*}
    \textcolor{black}{Here, the last inequality follows from the fact that $\mathcal{A}_{ij}$ is nonnegative, as shown in Theorem \ref{thm: Algebraic properties}.}
\end{proof}

In \cite{klein1993resistance}, the authors proved the above theorem in an undirected case based on the use of an electrical network. However, these techniques cannot be applied to directed graphs.
Thus, the above proof is new.
\begin{remark}
    Let $R_{\G} \in \R^{n \times n}$ be the effective resistance of a strongly connected directed graph $\G$.
    Then, the total effective resistance \textcolor{black}{(or the Kirchhoff index)} is defined as
    \begin{align*}
        R_{\G, \mathrm{tot}} &:=\frac{1}{2}\sum_{a,b=1}^n R_{\G}(a, b) = \frac{1}{2}\mathbf{1}_n^{\top}R_{\G}\mathbf{1}_n.
    \end{align*}
     Moreover, if $\G$ is a strongly connected weight balanced directed graph, then $R_{\G}$ can be written as
    \begin{align*}
        R_{\G} = D_{\mathcal{L}_s^{\dagger}}\mathbf{1}_n \mathbf{1}_n^{\top} + \mathbf{1}_n \mathbf{1}_n^{\top} D_{\mathcal{L}_s^{\dagger}} - 2\mathcal{L}_s^{\dagger},
    \end{align*}
    where $D_{\mathcal{L}_s^{\dagger}} = \mathrm{diag}(\mathcal{L}_{s, 11}^{\dagger}, \ldots, \mathcal{L}_{s, n n}^{\dagger})$.
    Thus,
    \begin{align*}
        R_{\G, \mathrm{tot}} &= \frac{1}{2}\mathbf{1}_n^{\top}(D_{\mathcal{L}_s^{\dagger}}\mathbf{1}_n \mathbf{1}_n^{\top} + \mathbf{1}_n \mathbf{1}_n^{\top} D_{\mathcal{L}_s^{\dagger}} - 2\mathcal{L}_s^{\dagger})\mathbf{1}_n\\
        &= n \mathrm{tr}(\mathcal{L}_s^{\dagger}) = n \mathrm{tr}(\mathcal{L}^{\dagger}) = n \sum_{i=2}^n \lambda_i(\mathcal{L}^{\dagger}) = n\sum_{i=2}^{n}\frac{1}{\lambda_i(\mathcal{L})},
    \end{align*}
    where $\lambda_i(\mathcal{L})$ and $\lambda_i(\mathcal{L}^{\dagger})$ denote non-zero eigenvalues of $\mathcal{L}$ and $\mathcal{L}^{\dagger}$, respectively.
    In an undirected case, in terms of resistance distance, the total effective resistance is a quantitative measure of how well “connected" the network is, or how “large" the network is \cite{ghosh2008minimizing}.
\end{remark}

\begin{remark}
The pseudoinverse of $\mathcal{L}^{\dagger}_s$ is not equal to $\mathcal{L}_s := (\mathcal{L}+\mathcal{L}^{\top})/2$.
    Besides, the matrix $\left(\mathcal{L}^{\dagger}_s\right)^{\dagger}$ is a symmetric, undirected Laplacian on the same set of nodes and possibly admits negative edge weights.
\textcolor{black}{In fact, for example, consider
        $\mathcal{L} = \begin{bmatrix}
            2&-1&-1&0\\
            0&1&0&-1\\
            0&0&1&-1\\
            -2&0&0&2
        \end{bmatrix}$.
Then, $\left(\mathcal{L}^{\dagger}_s\right)^{\dagger} =
        \frac{1}{6}\begin{bmatrix}
            16&-4&-4&-8\\
            -4&7&1&-4\\
            -4&1&7&-4\\
            -8&-4&-4&16
        \end{bmatrix}
        \neq \mathcal{L}_s$.}
\end{remark}

Consider a Markov chain $X$, and define $C(a, b)$ as the commute time of $a$ and $b$; that is,
$C(a,b) = h(a, b) + h(b, a)$,
where
$h(a, b) := \mathrm{E}[\sigma_b \mid X_0 = a]$ is the mean hitting time from $a$ to $b$, which is
the expected number of steps from $a$ to $b$.
In Theorem 2.2 of \cite{chandra1996electrical}, the authors pointed out the relationship between effective resistance and commute time for undirected  connected graphs. 
This result can be generalized to strongly connected weight balanced directed graphs in the following way:
\begin{theorem}\label{thm: commute time}
    Let $\G = ([n], \E, \A)$ be a strongly connected weight balanced directed graph.
    Consider a Markov chain $X$ with probability transition matrix $P = \mathcal{D}^{-1}\A$ on $[n]$.
     Then,
    \begin{align*}
        C(a, b) &= (\mathrm{tr} \mathcal{D}) R_{\G}(a, b).
    \end{align*}
\end{theorem}
\begin{proof}
    For any vertex $i \in [n]$ such that $i \neq b$,
    \begin{align*}
        h(i, b) 
        &=1 + \sum_{j: \{i,j\}\textcolor{black}{\in \E}}P_{i j}h(j, b) = 1 + \frac{1}{\mathcal{D}_{i i}}\sum_{j: \{i,j\} \in \E}\A_{i j}h(j, b),
    \end{align*}
    where we used the Markov property and $P_{i j} = \A_{i j} / \mathcal{D}_{ii}$.
    Then,
    \begin{align*}
        \mathcal{D}_{ii} &= \mathcal{D}_{i i}h(i, b) - \sum_{j: \{i,j\} \in \E}\A_{i j}h(j, b) = \sum_{j: \{i,j\} \in \E}\A_{i j}(h(i,b) - h(j, b)).
    \end{align*}
    Let $h_b := (h(1, b), \ldots, h(n, b))^{\top}$. Then, for any vertex $i \in [n]$ satisfying $i \neq b$,
    \begin{align*}
        (\mathcal{L}h_b)_i &= \mathcal{D}_{ii}h_b(i) - \sum_{j: (i, j) \in \E}\A_{i j}h_b(j)
        = \sum_{j: \{i,j\}}\A_{i j}(h(i,b) - h(j, b))
        = \mathcal{D}_{i i}.
    \end{align*}
    Since $\G$ is weight balanced, $\mathbf{1}_n^{\top}\mathcal{L} = \mathbf{0}_n^{\top}$. Thus,
    \begin{align*}
        \sum_{i=1}^n (\mathcal{L}h_b)_i &= (\mathbf{1}_n)^{\top}\mathcal{L}h_b = \mathbf{0}_n^{\top}h_b = 0,
    \end{align*}
    which indicates that
    $(\mathcal{L}h_b)_b = -\sum_{i \neq b}\mathcal{D}_{i i} = \mathcal{D}_{b b} - \mathrm{tr} \mathcal{D}$.

    Similarly, if $h_a = (h(1, a), \ldots, h(n, a))^{\top}$, we then obtain $(\mathcal{L}h_a)_i = \mathcal{D}_{i i}$ when $i \neq a$ and
    $(\mathcal{L}h_a) = \mathcal{D}_{a a} - \mathrm{tr} \mathcal{D}$.
    Thus,
    \begin{align*}
        \mathcal{L}(h_b - h_a) &= \mathrm{tr} \mathcal{D}(e_a - e_b),
    \end{align*}
    According to Lemma 4 in \cite{fontan2021properties}, $\mathcal{L}^{\dagger}\mathcal{L} = \Pi_n$.
    Thus,
    \begin{align*}
        \frac{1}{\mathrm{tr} \mathcal{D}}\Pi_n(h_b -h_a) &= \mathcal{L}^{\dagger}(e_a - e_b).
    \end{align*}
    Therefore, we obtain
    \begin{align*}
        R_{\G}(a, b) &= (e_a - e_b)^{\top}\mathcal{L}^{\dagger}(e_a - e_b) \\
        &= \frac{1}{\mathrm{tr} \mathcal{D}}(e_a - e_b)^{\top}\Pi_n(h_b - h_a) = \frac{1}{\mathrm{tr} \mathcal{D}}(e_a - e_b)^{\top}(h_b - h_a)\\
        &=\frac{1}{\mathrm{tr} \mathcal{D}}(h(a,b)-h(a, a)-h(b,b)+h(b,a)) \\
        &= \frac{1}{\mathrm{tr} \mathcal{D}}(h(a,b)+h(b,a)) = \frac{1}{\mathrm{tr} \mathcal{D}}C(a,b),
    \end{align*}
    where we used
    Theorem \ref{thm: ER of Weight Balanced Directed Graph},
    $\Pi_n(e_a - e_b) = (e_a - e_b)$, and $h(a,a) = h(b,b) = 0$.
\end{proof}

We define $C$ as the cover time of $\G$; that is, $C = \max_{i \in V}C_i$,
where $C_i$ is the expected length of a walk starting at $i$ and ending when every vertex in $\G$ is visited at least once. The following corollary, which is a generalization of Theorem 2.4 in \cite{chandra1996electrical}, shows that the effective resistance captures the cover time.
\begin{corollary}
    Let $\G = ([n], \E, \A)$ be a strongly connected weight balanced directed graph. Consider a Markov chain $X$ with probability transition matrix $P = \mathcal{D}^{-1}\A$ on $[n]$. Then,
    \begin{align}
        \frac{\mathrm{tr}\mathcal{D}}{2}\max_{a,b \in [n]}R_{\G}(a,b) \leq C \leq (1 + \log n)\mathrm{tr}\mathcal{D}\max_{a,b \in [n]}R_{\G}(a,b) .
    \end{align}
\end{corollary}

\subsubsection{General strongly connected directed graphs}
\label{subsubsec: ER of general SC graphs}
In Theorem \ref{thm: ER of Weight Balanced Directed Graph}, we have
characterized the effective resistance of strongly connected weight balanced directed graphs using the corresponding loop-less Laplacian matrix,
and stated the symmetry in Theorem \ref{thm: the metricity}.
However,
the effective resistance for general strongly connected directed graphs is usually asymmetric; that is, $R_{\G}(i, j) \neq R_{\G}(j, i)$.
The following theorem shows that the asymmetry can be reduced to the node characteristics in a weight balanced directed graph.
\begin{theorem}[\textit{Effective Resistance of Strongly Connected Graph}]\label{thm: Effective Resistance of Strongly Connected Graph}
        Let $\G = ([n], \E, \A)$ be a strongly connected directed graph and let $\mathcal{L}$ be the corresponding loop-less Laplacian matrix.
        Then, there exists a positive diagonal matrix $M = \mathrm{diag}(m_1, \ldots, m_n)$ such that the effective resistance from node $a$ to node $b$ in $\G$ can be written as
        \begin{align*}
            R_{\G}(a, b)&= m_a(e_a - e_b)^{\top}(M\mathcal{L})^{\dagger}(e_a - e_b).
        \end{align*}
    \end{theorem}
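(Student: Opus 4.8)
The plan is to reduce the general strongly connected case to the weight balanced case of Theorem~\ref{thm: ER of Weight Balanced Directed Graph} by a suitable \emph{diagonal rescaling} of $\mathcal{L}$. First I would construct the matrix $M$ from the Markov chain of Section~\ref{sec: Effective resistance of directed graphs}. Since $\G$ is strongly connected, the transition matrix $P = \mathcal{D}^{-1}\A$ is irreducible and row stochastic, so by the Perron--Frobenius theorem it admits a unique stationary distribution $\pi > 0$ satisfying $\pi^{\top} P = \pi^{\top}$. Setting $m_i := \pi_i/\mathcal{D}_{i i}$ and $M := \mathrm{diag}(m_1,\ldots,m_n)$ gives a positive diagonal matrix, and the identity $\pi^{\top}P = \pi^{\top}$ rewrites as $m^{\top}\mathcal{D} = m^{\top}\A$, i.e. $m^{\top}\mathcal{L} = \mathbf{0}_n^{\top}$. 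Hence $M\mathcal{L}$ has nonpositive off-diagonal entries, vanishing row sums ($M\mathcal{L}\mathbf{1}_n = M(\mathcal{L}\mathbf{1}_n) = \mathbf{0}_n$), and vanishing column sums ($\mathbf{1}_n^{\top}M\mathcal{L} = m^{\top}\mathcal{L} = \mathbf{0}_n^{\top}$); that is, $M\mathcal{L}$ is the loop-less Laplacian of a weight balanced directed graph $\tilde{\G}$, which is strongly connected because the positive rescaling preserves the edge support.

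The technical core is to show that left diagonal scaling commutes with the Kron reduction underlying Definition~\ref{def: EC and ER of directed graphs}. Writing $\alpha = \{a,b\}$, $\beta = \{a,b\}^c$, $M_1 = \mathrm{diag}(m_a, m_b)$, and $M_2 = M[\beta,\beta]$, a direct computation using $(M_2\mathcal{L}[\beta,\beta])^{-1} = \mathcal{L}[\beta,\beta]^{-1}M_2^{-1}$ gives
\begin{align*}
(M\mathcal{L})/(M\mathcal{L})[\beta,\beta] &= M_1\mathcal{L}[\alpha,\alpha] - M_1\mathcal{L}[\alpha,\beta](M_2\mathcal{L}[\beta,\beta])^{-1}M_2\mathcal{L}[\beta,\alpha]\\
&= M_1\left(\mathcal{L}/\mathcal{L}[\beta,\beta]\right),
\end{align*}
since the factor $M_2$ cancels. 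Reading off the $(a,a)$ entry and invoking Definition~\ref{def: EC and ER of directed graphs} yields $C_{\tilde{\G}}(a,b) = m_a\, C_{\G}(a,b)$, and therefore $R_{\G}(a,b) = m_a\, R_{\tilde{\G}}(a,b)$.

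Finally, since $\tilde{\G}$ is weight balanced and strongly connected, Theorem~\ref{thm: ER of Weight Balanced Directed Graph} applies to its loop-less Laplacian $M\mathcal{L}$ and gives $R_{\tilde{\G}}(a,b) = (e_a - e_b)^{\top}(M\mathcal{L})^{\dagger}(e_a - e_b)$. Substituting this into $R_{\G}(a,b) = m_a\, R_{\tilde{\G}}(a,b)$ produces the claimed formula. I expect the main obstacle to be the construction step: establishing that a positive diagonal $M$ with $m^{\top}\mathcal{L} = \mathbf{0}_n^{\top}$ exists and is strictly positive, which is precisely where strong connectivity enters through Perron--Frobenius (equivalently, this is the statement that every strongly connected digraph can be made weight balanced by a positive vertex rescaling). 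Once $M$ is in hand, the Schur-complement scaling identity, though the computational heart of the argument, is routine, and the reduction to Theorem~\ref{thm: ER of Weight Balanced Directed Graph} is immediate.
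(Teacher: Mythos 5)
Your proof is correct, and it follows the same overall strategy as the paper (rescale $\mathcal{L}$ by a positive diagonal $M$ to obtain a weight balanced graph, then invoke Theorem~\ref{thm: ER of Weight Balanced Directed Graph}), but the two halves are justified differently. For the existence of $M$, the paper cites a result of Altafini guaranteeing a positive diagonal $M$ with $\ker(M\mathcal{L})=\ker((M\mathcal{L})^{\top})=\mathrm{span}(\mathbf{1}_n)$, whereas you construct $M$ explicitly from the stationary distribution of $P=\mathcal{D}^{-1}\A$ via Perron--Frobenius, setting $m_i=\pi_i/\mathcal{D}_{ii}$; these are equivalent, and your construction is in fact the one the paper itself uses later in Subsection~\ref{subsec: Hitting probability metric} when it identifies $\phi_i=m_i\mathcal{D}_{ii}$. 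For the key scaling identity $R_{\G}(a,b)=m_a R_{\tilde{\G}}(a,b)$, the paper argues probabilistically: the Markov chains of $\G$ and $\G_{\mathrm{balanced}}$ coincide, so by Lemma~\ref{lem: probabilistic interpretation} the conductances differ only through the degree factor, $C_{\tilde{\G}}(a,b)=m_a\mathcal{D}_{aa}P_{\mathrm{esc}}(a,b)=m_aC_{\G}(a,b)$. You instead prove the purely linear-algebraic identity $(M\mathcal{L})/(M\mathcal{L})[\beta,\beta]=M_1\bigl(\mathcal{L}/\mathcal{L}[\beta,\beta]\bigr)$ (left diagonal scaling commutes with the Schur complement because the $M_2$ factors cancel) and read off the $(a,a)$ entry; this is equally valid, slightly more self-contained since it works directly from Definition~\ref{def: EC and ER of directed graphs} without routing through the escape-probability interpretation, while the paper's route is shorter given that Lemma~\ref{lem: probabilistic interpretation} is already established. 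Both arguments correctly use that $\tilde{\G}$ remains strongly connected (needed for Theorem~\ref{thm: ER of Weight Balanced Directed Graph}), and your verification that $M\mathcal{L}$ is the loop-less Laplacian of a weight balanced graph is sound.
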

\begin{proof}
    From Theorem 2 of \cite{altafini2019investigating}, there exists a unique (up to a scalar multiplication) positive diagonal matrix $M = \mathrm{diag}(m_1, \ldots, m_n)$
    such that $\ker(M\mathcal{L}) = \ker(\left(M\mathcal{L}\right)^{\top}) = \mathrm{span}(\mathbf{1}_n)$.
    Hence, $M\mathcal{L}$ is the loop-less Laplacian matrix of a weight balanced directed graph $\G_{\mathrm{balanced}} = ([n], \E, \A_{\mathrm{balanced}})$ such that
    $\A_{\mathrm{balanced}, i j} = m_i \A_{i j}$. Note that the Markov chains induced by $\G_{\mathrm{balanced}}$ and $\G$ are identical. For this reason,
    \begin{align*}
        R_{\G}(a, b) &= \frac{1}{\sum_{j=1}^{n}\A_{a j} P_{\mathrm{esc}}(a, b)}\\
        &= \frac{m_a}{\sum_{j=1}^n\A_{\mathrm{balanced}, a j} P_{\mathrm{esc}}(a, b)}\\
        &= m_a R_{\G_\mathrm{balanced}}(a, b)\\
        &= m_a(e_a - e_b)^{\top}(M\mathcal{L})^{\dagger}(e_a - e_b),
    \end{align*}
    where we used Theorem \ref{lem: probabilistic interpretation}.
    The last equality follows from Theorem \ref{thm: ER of Weight Balanced Directed Graph}.
\end{proof}
\begin{remark}
    Theorems \ref{thm: the metricity} and \ref{thm: Effective Resistance of Strongly Connected Graph} show that 
    \begin{align*}
        R_{\G_{\mathrm{balanced}}}(i, j) &= \frac{R_{\G}(i, j)}{m_i}
    \end{align*}
    is a metric on a strongly connected directed graph $\G$, which is invariant under a Kron reduction if the original graph is loop-less.
    Moreover, Theorem \ref{thm: the metricity} states that the metric $d_{i j} =\sqrt{R_{\G_{\mathrm{balanced}}}(i, j)}$ is a Euclidean distance.
    This is non-trivial, because we can show an example of a non-Euclidean distance on the graph.
    In fact, consider a simple 4-node undirected graph $G$ shown in Fig. \ref{fig: shortest_path_distance} and the shortest path distance on $G$.
    The matrix containing the shortest path distances taken pairwise between the elements of $G$ is obtained as follows:
    \begin{align*}
        D = \begin{bmatrix}
            0&1&\sqrt{2}&1\\
            1&0&1&2\\
            \sqrt{2}&1&0&1\\
            1&2&1&0
        \end{bmatrix}.
    \end{align*}
    Suppose that this structure can be embedded in a Euclidean space.
    Here, $D_{2 1} + D_{1 4} = D_{2 4}$ indicates that the values 2, 1, and 4 must be arranged in a straight line at equal intervals in that order.
    Similarly, $D_{2 3} + D_{3 4} = D_{2 4}$ implies that 2, 3, and 4 must be arranged in a straight line at equal intervals, also in that order.
    Thus, nodes 1 and 3 must be co-located, which contradicts $D_{1 3} = \sqrt{2} > 0$.
    Therefore, the shortest path distance on a graph is not always a Euclidean distance.
    \begin{figure}[hbtp]
    \centering
    \includegraphics[keepaspectratio, scale=0.27]{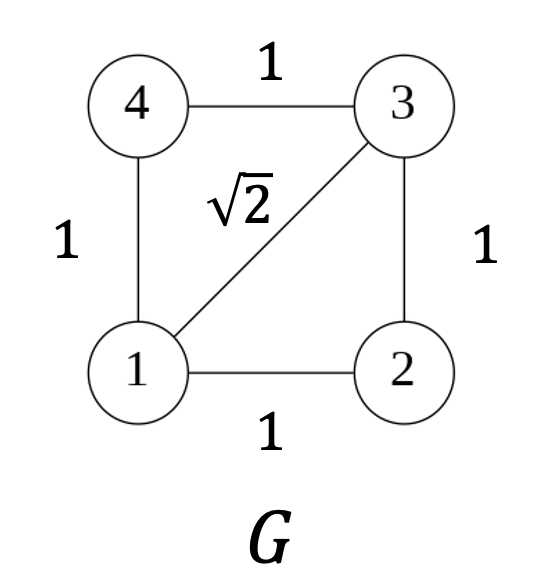}
    \caption{A simple 4-node undirected graph $G$.}
    \label{fig: shortest_path_distance}
\end{figure}
\end{remark}
\section{Comparison with existing studies}
\label{sec: Comparison with existing studies}
In this section, we discuss some of the concepts related to our study.
\subsection{Effective resistance and Kron reduction based on Lyapunov theory}
\label{subsec: ER and KR based on Lyapunov theory}
In this subsection, we introduce the effective resistance and Kron reduction based on Lyapunov theory.
\subsubsection{Effective resistance}
\label{subsubsec: ER based on Lyapunov theory}
The effective resistance of a directed graph based on Lyapunov theory has previously been defined as follows:
\begin{definition}(see \cite{young2015new})\label{def: ER_Lyapunov}
    Let $\G = ([n], \E, \A)$ be a directed graph without a self-loop, and let $\mathcal{L}$ be the corresponding Laplacian matrix.
    Suppose $G$ contains a globally reachable node.
    Then, the effective resistance between node $a$ and $b$ in $\G$ is defined as
    \begin{align*}
        \tilde{R}_{\G}(a, b) &:= (e_a - e_b)^{\top}X(e_a - e_b) = X_{a a} + X_{b b} - X_{a b} - X_{b a},
    \end{align*}
    where
    \begin{align}
        X &= 2P^{\top}\Sigma P,\,
        \Bar{L}\Sigma + \Sigma \Bar{L}^{\top} = I_{n-1},\,\label{Eq: Lyapunov}
        \Bar{L} = P\mathcal{L}P^{\top},
    \end{align}
    and $P \in \R^{(n-1)\times n}$ is a matrix satisfying
        $P\mathbf{1}_n = \mathbf{0}_{n-1}$, $P P^{\top} = I_{n-1}$, $P^{\top}P = \Pi_n$.
\end{definition}

Equation (\ref{Eq: Lyapunov}) has a unique, symmetric, and positive definite solution $\Sigma$ when $\G$ is connected, and $\Sigma$ is computed as
    $\Sigma = \int_0^{\infty}\exp(-\Bar{L}t)\exp(-\Bar{L}^{\top}t) \mathrm{d}t$.
Thus,
\begin{align*}
    \tilde{R}_{\G}(a, b) 
    = 2(e_a - e_b)^{\top}P^{\top}XP(e_a - e_b)
    = 2\int_0^{\infty}\left\| \left(e_{a} - e_{b}\right)^{\top}\exp(-\mathcal{L}t)\right\|^2 \mathrm{d}t.
\end{align*}

The following properties of the effective resistance were proven in \cite{young2015new}:
\begin{enumerate}
    \item The definition of $\tilde{R}_{\G}(a,b)$ is well-defined. In particular, the value is independent of the choice of $P$.\\
    \item The definition of $\tilde{R}_{\G}(a,b)$ is a generalization to that of the undirected effective resistance used in \cite{dorfler2012kron}; i.e., $X = L^{\dagger}$ if $L$ is symmetric.
    \item $\sqrt{\tilde{R}_{\G}(a,b)}$ is a graph metric, whereas $\tilde{R}_{\G}(a,b)$ is not.
\end{enumerate}

Unlike the effective resistance in the sense of Definition \ref{def: ER_Lyapunov},
the effective resistance introduced in this study is a graph metric for strongly connected weight balanced directed graphs,
as indicated in Theorem \ref{thm: the metricity}.

\subsubsection{Kron reduction method using symmetrization}
In \cite{fitch2019effective}, the author showed that the pseudoinverse of $X$ in Definition \ref{def: ER_Lyapunov} is a symmetric, positive semidefinite matrix with zero row and column sums.
Therefore, $\hat{L}_u := X^{\dagger}$ can be interpreted as an undirected Laplacian matrix, where $\hat{L}_u$ potentially admits negative edge weights and is associated with an undirected signed graph $\hat{G}_u$ on the same set of nodes as $\G$.
Here,
$(X^{\dagger})^{\dagger} = X$ implies that
    $\tilde{R}_{\G}(a, b) = (e_a - e_b)^{\top}(\hat{L}_u)^{\dagger}(e_a - e_b)$
for $a, b \in [n]$. 
Thus, the effective resistance in $\G$ is equivalent to that in $\hat{G}_u$.
Furthermore, $\mathcal{L}$ can be decomposed into the following matrices:
\begin{align*}
    \mathcal{L} &= H(I_n + 2K)\hat{L}_u,\label{Eq: from undirected to directed}
\end{align*}
where
\begin{align}
    H &= \mathcal{L}(\Pi_n\mathcal{L})^{\dagger},\label{Eq: def of H}\\
    (\Pi_n\mathcal{L}) K + K(\Pi_n\mathcal{L})^{\top} &= \frac{1}{2}((\Pi_n\mathcal{L}) - (\Pi_n\mathcal{L})^{\top})\label{Eq: def of K}.
\end{align}
Using this decomposition, a Kron reduction of directed graphs based on Lyapunov theory is defined as follows:
\begin{definition}(see \cite{fitch2019effective})\label{def: KR based on Lyapunov}
        Let $\G = ([n], \E, \A)$ be a directed graph with a globally reachable node and without a self-loop, and let $\mathcal{L}$ be the corresponding Laplacian matrix.
        Consider a subset of nodes $\alpha \subset [n]$. Subsequently, the Kron-reduced Laplacian $\mathcal{L}^{\mathrm{k\mathchar`-r}} \in \R^{|\alpha| \times |\alpha|}$
        is defined as
        \begin{align*}
            \mathcal{L}^{\mathrm{k\mathchar`-r}} &= H^{\mathrm{k\mathchar`-r}}(I_{|\alpha|} + 2K^{\mathrm{k\mathchar`-r}})\hat{L}_u^{\mathrm{k\mathchar`-r}},
        \end{align*}
        where
        \begin{align*}
            \hat{L}_u^{\mathrm{k\mathchar`-r}} &= \hat{L}_u/\hat{L}_u[\alpha^c, \alpha^c] = \hat{L}_u[\alpha, \alpha] - \hat{L}_u[\alpha, \alpha^c]\hat{L}_u[\alpha^c, \alpha^c]^{-1}\hat{L}_u[\alpha^c, \alpha],\\
            H^{\mathrm{k\mathchar`-r}} &= H[\alpha, \alpha]\Pi_{|\alpha|},\,\, K^{\mathrm{k\mathchar`-r}} = K[\alpha, \alpha]\Pi_{|\alpha|},
        \end{align*}
        and $H$ and $K$ are defined by (\ref{Eq: def of H}) and (\ref{Eq: def of K}), respectively.
\end{definition}

Let $\hat{G}_u$, $\hat{G}^{\mathrm{k\mathchar`-r}}_{u}$, $\G^{{\mathrm{k\mathchar`-r}}}$ be the corresponding graphs of $\hat{L}_u$, $\hat{L}_u^{\mathrm{k\mathchar`-r}}$, $\mathcal{L}^{\mathrm{k\mathchar`-r}}$, respectively.
Based on the construction, the effective resistance can be obtained from $\G$, $G$, $G^{\mathrm{k\mathchar`-r}}_{u}$, or $\G^{{\mathrm{k\mathchar`-r}}}$ for any $a, b \in \alpha$, i.e.,
\begin{align*}
    \tilde{R}_{\G}(a, b) &= \tilde{R}_{\hat{G}_u}(a, b) = \tilde{R}_{\hat{G}^{\mathrm{k\mathchar`-r}}_{u}}(a, b) = \tilde{R}_{\G^{{\mathrm{k\mathchar`-r}}}}(a, b).
\end{align*}
In other words, the effective resistance in Definition \ref{def: ER_Lyapunov} is invariant under the Kron reduction in Definition \ref{def: KR based on Lyapunov}.

\begin{figure}[hbtp]
    \centering
    \includegraphics[keepaspectratio, scale=0.20]{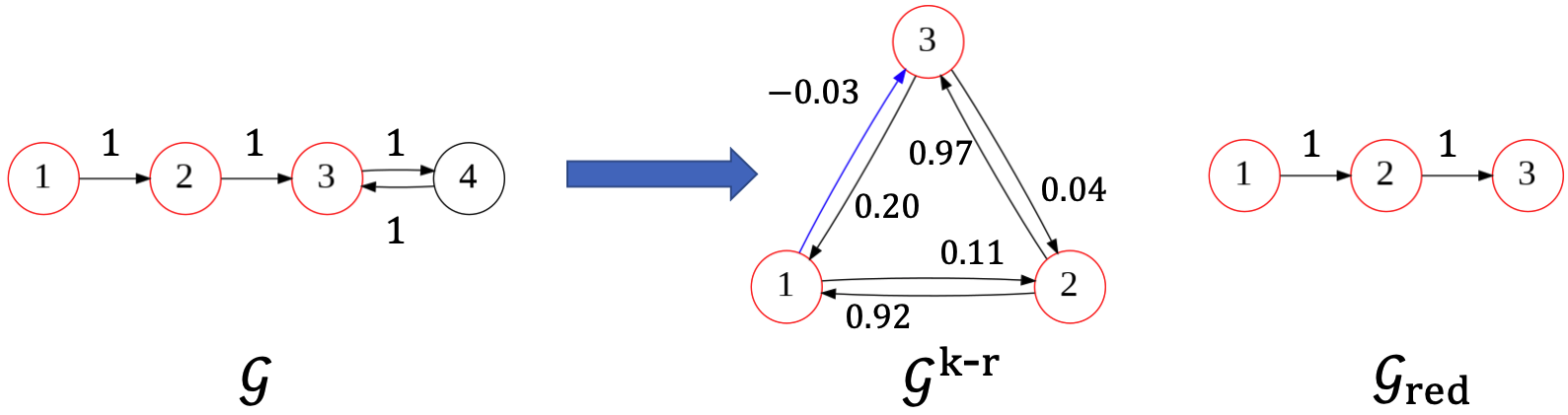}
    \caption{Illustration of Kron reduction of a 4-node directed graph $\G$. $\G^{\mathrm{k\mathchar`-r}}$ is calculated based on Definition \ref{def: KR based on Lyapunov}, and $\G_{\mathrm{red}}$ is calculated based on Definition \ref{def: KR}.}
    \label{fig: Comparison_with_KR}
\end{figure}

The Kron reduction method in Definition \ref{def: KR} preserves the original network structure, as mentioned in Section \ref{sec: Kron reduction of directed graphs}, whereas the approach based on Definition \ref{def: KR based on Lyapunov} does not.
To show an example of this, consider a 4-node graph and its Kron-reduced graphs illustrated in Fig. \ref{fig: Comparison_with_KR}.
The reduced graph $\G_{\mathrm{red}}$ defined in Definition \ref{def: KR} preserves the interconnection structure between the boundary nodes.
However, the reduced graph $\G^{\mathrm{k\mathchar`-r}}$ in Definition \ref{def: KR based on Lyapunov} is a complete directed graph having an edge with a negative weight.
Thus, the reduction method based on Definition \ref{def: KR based on Lyapunov} does not preserve the interconnection structure and non-negativity of the edge weights of the original graph.
Note that the Kron reduction shown in Definition \ref{def: KR} (Definition \ref{def: KR based on Lyapunov}) does not preserve the effective resistance in Definition \ref{def: ER_Lyapunov} (Definition \ref{def: EC and ER of directed graphs}), because our research adopts a different approach than those in \cite{fitch2019effective, young2015new}.

\subsection{Resistance distance of ergodic Markov chain}
\label{subsec: Resistance distance of ergodic Markov chain}
In this section, we introduce the \textit{resistance distance of an ergodic Markov chain} defined in \cite{choi2019resistance}.
\begin{definition}(see \cite{choi2019resistance})\label{def: Resistance distance of ergodic Markov chain}
        Let $(X_t)_{t=0}^{\infty}$ be an irreducible and aperiodic Markov chain in state space $[n]$ with transition matrix $P$,
        and let $\phi$ be an invariant distribution. By writing $\Phi$ as the matrix, where each row is $\phi$,
        the fundamental matrix $F \in \R^{n \times n}$ is given by
        \begin{align*}
            F := (I_n - P + \Phi)^{-1}.
        \end{align*}
        The resistance distance $\Omega_{i j}$ between $i, j \in [n]$ is defined as
        \begin{align*}
            \Omega_{i j} &:= F_{i i} + F_{j j} - F_{i j} - F_{j i}.
        \end{align*}
\end{definition}

Proposition 1.2 of \cite{choi2019resistance} states that $\Omega = (\Omega_{i j})_{i,j \in [n]}$ generally defines a semi-metric on $[n]$,
and is a metric when $P$ is doubly stochastic. 
By contrast, $\mathcal{L} := I - P$ can be interpreted as the loop-less Laplacian of graph $\G$ whose adjacency matrix is $P$.
Typically, the resistance distance is not equivalent to the effective resistance.
However, when $P$ is doubly stochastic and the corresponding graph is weight-balanced, the two concepts coincide.
To see this, note that Proposition 1.1 of \cite{choi2019resistance} states that $\Omega_{i j}$ can be written as
    $\Omega_{i j} = (e_i - e_j)D(e_i - e_j)$,
where $D$ is the group inverse of $\mathcal{L}$ satisfying
    $\mathcal{L}D\mathcal{L} = \mathcal{L}$, $D\mathcal{L}D = D$, $\mathcal{L}D = D\mathcal{L}$.
Since $\mathcal{L}$ is weight balanced, the pseudoinverse of $L$ satisfies
    $\mathcal{L}^{\dagger}\mathcal{L} = \mathcal{L} \mathcal{L}^{\dagger} = \Pi_n$.
The uniqueness of the group inverse \cite{erdelyi1967matrix} implies $\mathcal{L}^{\dagger} = D$. Thus,
\begin{align*}
    \Omega_{i j} &= (e_i - e_j)D(e_i - e_j) = (e_i - e_j)\mathcal{L}^{\dagger}(e_i - e_j) = R_{\G}(i, j).
\end{align*}
In other words, the resistance distance is the effective resistance defined in this study.
Thus, the resistance distance of an ergodic Markov chain is invariant under the Kron reduction in Definition \ref{def: KR} if the original graph is loop-less. Furthermore, Theorem \ref{thm: commute time} can be interpreted as a generalization of Proposition 1.1 (4) in \cite{choi2019resistance}.
\subsection{Hitting probability metric}
\label{subsec: Hitting probability metric}
In this subsection, we introduce the \textit{hitting probability metric} defined in \cite{boyd2021metric} and describe the relationship between the effective resistance and this metric.
\begin{definition}(see \cite{boyd2021metric})\label{def: Hitting probability metric}
        Let $(X_t)_{t=0}^{\infty}$ be a discrete-time Markov chain in state space $[n]$
        with an irreducible transition matrix $P$,
        and let $\phi$ be the invariant distribution for $P$; that is, $\phi P = \phi$.
        Then, the hitting probability metric $d^{\beta}: [n] \times [n] \to \R$ is defined as
        \begin{align*}
            d^{\beta}(i, j) &= -\log\left(A_{i, j}^{\mathrm{hp}, \beta}\right),
        \end{align*}
        where $A^{\mathrm{hp}, \beta} = \left(A_{i, j}^{\mathrm{hp}, \beta}\right)_{i, j \in [n]} \in \R^{n \times n}$ is a normalized hitting probability matrix defined as
        \begin{align*}
            A_{i, j}^{\mathrm{hp}, \beta} &:= \begin{cases}
                \frac{\phi_{i}^{\beta}}{\phi_j^{1-\beta}}P_{\mathrm{esc}}(i, j), &(i \neq j),\\
                1 &(i = j),
            \end{cases}
        \end{align*}
        where $\beta \in [1/2, \infty)$, and $P_{\mathrm{esc}}(i, j)$ is the probability, starting at $i$, that the walk reaches $j$ before returning to $i$.
    \end{definition}

Theorem 1.4 of \cite{boyd2021metric} states that $d^{\beta}$ is a metric for $\beta \in (1/2, 1]$.
Let $\G = ([n], \E, \A)$ be a strongly connected directed graph, $\mathcal{L}$ be the corresponding loop-less Laplacian matrix, and $P = \mathcal{D}^{-1}\A$
 correspond to the transition matrix. According to Theorem \ref{thm: Effective Resistance of Strongly Connected Graph}, there exists a unique (up to scalar multiplication) positive diagonal matrix $M$ such that
$\mathcal{L}_{\mathrm{balanced}} := M\mathcal{L}$ denotes the loop-less Laplacian matrix of a weight balanced directed graph $\G_{\mathrm{balanced}}$.
Let $M = \mathrm{diag}(m_1, \ldots, m_n)$ such that $\sum_{i=1}^n m_iD_{i i} = 1$. Then,
\begin{align*}
    (m_1D_{1 1}, \ldots, m_nD_{n n})P &= (m_1, \ldots, m_n)(\mathcal{D} - \mathcal{L})\\
    &= (m_1D_{1 1}, \ldots, m_nD_{n n}),
\end{align*}
where we used the equations $(m_1, \ldots, m_n)\mathcal{L} = \mathbf{1}_n^{\top}M\mathcal{L} = \mathbf{1}_n^{\top}\mathcal{L}_{\mathrm{balanced}} = \mathbf{0}_n^{\top}$.
The uniqueness of the invariant distribution indicates that $\phi_i = m_i D_{i i}$. Thus,
\begin{align*}
    d^{\beta}(i, j)
    &= -\log\left(\frac{\phi_i^{\beta}}{\phi_j^{1-\beta}}P_{\mathrm{esc}}(i, j)\right)\\
    &= -\log\left(\phi_i P_{\mathrm{esc}}(i, j)\right) + (1-\beta)\log\left(\phi_i\phi_j\right)\\
    &= -\log\left(m_i D_{i i}P_{\mathrm{esc}}(i, j)\right)+ (1-\beta)\log\left(\phi_i\phi_j\right)\\
    &= \log\left(R_{\G_{\mathrm{balanced}}}(i, j)\right)+ (1-\beta)\log\left(\phi_i\phi_j\right)\\
    &= \log\left(R_{\G}(i, j)\right) - \log \left(m_i\right)+ (1-\beta)\log\left(\phi_i\phi_j\right)
\end{align*}
for distinct $i, j \in [n]$, where we used Theorem \ref{lem: probabilistic interpretation} and Theorem \ref{thm: Effective Resistance of Strongly Connected Graph}.
These equations indicate that the effective resistance contains information regarding the hitting probability metric.
Thus, the hitting probability metric is also invariant under the Kron reduction in Definition \ref{def: KR} if the original graph is loop-less.
In particular, $d^1$ is the logarithm of the effective resistance of the corresponding weight balanced directed graph $\G_{\mathrm{balanced}}$.

\section{Conclusion}
\label{sec: Conclusion}
We have generalized Kron reduction method to directed graphs in a manner that maintains the original interconnection structure and weight-balance.
We have also proposed the effective resistance of directed graphs based on the Markov chain theory, which is invariant under a Kron reduction.
Our generalization is derived from a probabilistic interpretation in which directed graphs arise naturally and can be calculated based on the Moore-Penrose pseudoinverse of a loop-less Laplacian.
Furthermore, we have shown that when a graph is strongly connected and weight balanced, the effective resistance relates to commute and covering times.
In addition, the effective resistance induces two novel distances on general strongly connected directed graphs.
We have proved that the resistance distance on an ergodic Markov chain is the same with the effective resistance in this paper in the doubly stochastic case, and we
have shown an association between the hitting probability metric and the effective resistance under a stochastic case.
Because the effective resistance and induced metric are insensitive to the shortest path distances, we believe that they can provide new information for applications such as a graph visualization, data exploration, and cluster detection, which will be a topic for future research.
Another topic for future research is a generalization of the relationship among the effective resistance, commute and covering times to general strongly connected directed graphs.
\bibliographystyle{siamplain}
\bibliography{references}
\end{document}